\newcommand{\QKD}{QKD\xspace}
\newcommand{\MDI}{MDI\xspace}
\newcommand{\MR}{MR\xspace}
\newcommand{\Xbasis}{{\mathtt X}\xspace}
\newcommand{\Zbasis}{{\mathtt Z}\xspace}
\newcommand{\Bbasis}{{\mathtt B}\xspace}
\newcommand{\even}{{\text{e}}}
\newcommand{\odd}{{\text{o}}}
\newtheorem{Thrm}{Theorem}
\newtheorem{Lem}{Lemma}
\newtheorem{Cor}{Corollary}
\newtheorem{Rem}{Remark}
\DeclareMathOperator{\Width}{Width}
\DeclareMathOperator{\BigOh}{O}
\begin{document}
\title{Security Of Finite-Key-Length Measurement-Device-Independent Quantum
 Key Distribution Using Arbitrary Number Of Decoys}

\author{H. F. Chau}
\thanks{email: \texttt{hfchau@hku.hk}}
\affiliation{Department of Physics, University of Hong Kong, Pokfulam Road,
 Hong Kong}
\date{\today}

\begin{abstract}
 In quantum key distribution, measurement-device-independent and decoy-state
 techniques enable the two cooperative agents to establish a shared secret key
 using imperfect measurement devices and weak Poissonian sources,
 respectively.
 Investigations so far are not comprehensive as they restrict to less than or
 equal to four decoy states.  Moreover, many of them involves pure numerical
 studies.
 Here I report a general security proof that works for any fixed number of
 decoy states and any fixed raw key length.
 The two key ideas involved here.
 The first one is the repeated application of the inversion formula for
 Vandermonde matrix to obtain various bounds on certain yields and error
 rates.
 The second one is the use of a recently proven generalization of the
 McDiarmid inequality.
 These techniques rise the best provably secure key rate of the
 measurement-device-independent version of the BB84 scheme by at least
 1.25~times and increase the workable distance between the two cooperative
 agents from slightly less than 60~km to slightly greater than 130~km in case
 there are $10^{10}$ photon pulse pair sent without a quantum repeater.
\end{abstract}

\maketitle

\section{Introduction}
\label{Sec:Intro}
 Quantum key distribution (\QKD) is the art for two trusted agents, commonly
 refers to as Alice and Bob, to share a provably secure secret key by
 preparing and measuring quantum states that are transmitted through a noisy
 channel controlled by an eavesdropper Eve who has unlimited computational
 power.
 In realistic \QKD setup, decoy-state technique allows Alice and Bob to obtain
 their secret key using the much more practical weak phase-randomized
 Poissonian sources~\cite{Wang05,LMC05}.  In addition,
 measurement-device-independent (\MDI) method enables them to use imperfect
 apparatus that may be controlled by Eve to perform measurement~\cite{LCQ12}.
 Decoy-state technique has been extensively studied.
 In fact, this technique can be applied to many different \QKD
 schemes~\cite{Wang05,LMC05,VV14,BMFBB16,DSB18}.
 Researches on the effective use of a general number of decoys have been
 conducted~\cite{Hayashi07,HN14,Chau18,CN20}.
 The effect of finite raw key length on the key rate has been
 investigated~\cite{HN14,LCWXZ14,BMFBB16,Chau18,CN20}.
 Nonetheless, security and efficiency analyses on the combined use of
 decoy-state and \MDI techniques are less comprehensive.  So far, they are
 restricted to less than or equal to four decoy
 states~\cite{MFR12,SGLL13,SGLL13e,CXCLTL14,XXL14,YZW15,ZYW16,ZZRM17,MZZZZW18,WBZJL19}.
 Furthermore, it is unclear how to extend these methods analytically to an
 arbitrary but fixed number of decoys.
 Along a slightly different line, the case of finite raw key length for the
 combined use of decoy-state and \MDI techniques has been studied.
 So far, these studies applied Azuma, Hoeffding and Sefling inequalities as
 well as Chernoff bound in a straightforward
 manner~\cite{CXCLTL14,YZW15,ZYW16,ZZRM17,MZZZZW18,WBZJL19}.

 Here I report the security analysis and a key rate formula for the
 BB84-based~\cite{BB84} \MDI-\QKD using passive partial Bell state detection
 for finite raw key length with the condition that Alice and Bob each uses an
 arbitrary but fixed number of decoys.
 One of the key ideas in this work is the repeated use of the analytical
 formula for the elements of the inverse of a Vandermonde matrix.  A tight
 bound on various yields and error rates for a general number of decoys can
 then be obtained through this analytical formula.  (Actually, Yuan
 \emph{et al.} also used repeated Vandermonde matrix inversion to obtain upper
 and lower bounds of the so-called two-single-photon yield in case one of the
 photon intensities used is $0$~\cite{YZLM16}.  Nevertheless, the bounds
 reported here are more general and powerful than theirs.)  The other key idea
 used here is the application of a powerful generalization of the McDiarmid
 inequality in mathematical statistics recently proven in Ref.~\cite{CN20}.
 This inequality is effective to tackle finite size statistical fluctuation of
 certain error rates involved in the key rate formula.

 I compute the secure key rate for the \MDI-version of the BB84 scheme using
 the setup and channel studied by Zhou \emph{et al.} in Ref.~\cite{ZYW16}.
 The best provably secure key rate for this setup before this work are
 reported by Mao \emph{et al.} in Ref.~\cite{MZZZZW18}.  Compared to their
 work, in case the total number of photon pulse pair send by Alice and Bob is
 $10^{10}$, the provably secure key rate using this new method is increased by
 at least 125\%.  Besides, the maximum transmission distance is increased from
 slightly less than 60~km to slightly greater than 130~km.
 This demonstrates the effectiveness of this new approach for \MDI-\QKD.
 
\section{The \MDI-\QKD Protocol}
\label{Sec:Protocol}
 In this paper, the polarization of all photon pulses are prepared either in
 $\Xbasis$-basis with photon intensity $\mu_{\Xbasis,i}$ (for $i=1,2,\cdots,
 k_\Xbasis$) or in $\Zbasis$-basis with photon intensity $\mu_{\Zbasis,i}$
 (for $i=1,2,\cdots,k_\Zbasis$).  For simplicity, I label these photon
 intensities in descending order by $\mu_{\Xbasis,1} > \mu_{\Xbasis,2} >
 \cdots > \mu_{\Xbasis,k_\Xbasis} \ge 0$ and similarly for
 $\mu_{\Zbasis,i}$'s.  I denote the probability of choosing the preparation
 basis $\Bbasis \in \{ \Xbasis,\Zbasis \}$ by $p_\Bbasis$ and the probability
 of choosing photon intensity $\mu_{\Bbasis,i}$ given the preparation basis
 $\Bbasis$ by $p_{i\mid\Bbasis}$.

 Here I study the following \MDI-\QKD protocol, which is a BB84-based scheme
 originally studied in Refs.~\cite{LCQ12,CXCLTL14}.
\begin{enumerate}
 \item Alice and Bob each has a phase-randomized Poissonian distributed
  source.  Each of them randomly and independently prepares a photon pulse and
  sends it to the untrusted third party Charlie.  They jot down the intensity
  and polarization used for each pulse.
  \label{Scheme:prepare}
 \item Charlie performs a partial Bell state measurement like the ones in
  Refs.~\cite{LCQ12,CXCLTL14,MR12}.  He publicly announces the measurement
  result including non-detection and inconclusive events.
  \label{Scheme:measure}
 \item Alice and Bob reveal the basis and intensity they used for each of
  their prepared photon pulse.  If the preparation bases of a pair of photon
  pulses they have sent to Charlie for Bell basis measurement disagree, they
  discard them.  If both pulses are prepared in the $\Xbasis$-basis, they
  reveal their preparation polarizations.  They also randomly reveal the
  preparation polarizations of a few pulses that they have both prepared in
  the $\Zbasis$-basis.  In this way, they can estimate the various yields and
  error rates to be defined in Sec.~\ref{Sec:Y_and_Q}.
  \label{Scheme:estimate}
 \item They use the preparation information of their remaining photon pulses
  that have been conclusively measured by Charlie to generate their raw secret
  keys and then perform error correction and privacy amplification on these
  keys to obtain their final secret keys according to the \MDI-\QKD procedure
  reported in Refs.~\cite{LCQ12,MR12}.  (Here I assume that Alice and Bob use
  forward reconciliation to establish the key.  The case of reverse
  reconciliation can be studied in a similar manner.)
  \label{Scheme:post_process}
\end{enumerate}


\section{Bounds On Various Yields And Error Rates In The \MDI-Setting}
\label{Sec:Y_and_Q}
 I use the symbol $Q_{\Bbasis,i,j}$ to denote the yield given that both Alice
 and Bob prepare their photons in $\Bbasis$-basis and that Alice (Bob) uses
 photon intensity $\mu_{\Bbasis,i}$ ($\mu_{\Bbasis,j}$) for $\Bbasis =
 \Xbasis, \Zbasis$ and $i,j = 1,2,\cdots,k_{\Bbasis}$.  More precisely, it is
 the portion of photon pairs prepared using the above description that Charlie
 declares conclusive detection.  Furthermore, I define the error rate of these
 photon pairs $E_{\Bbasis,i,j}$ as the portion of those conclusively detected
 photons above whose prepared polarizations by Alice and Bob are the same.
 And I set $\bar{E}_{\Bbasis,i,j} = 1 - E_{\Bbasis,i,j}$.
 Similar to the case of standard (that is, non-\MDI) implementation of \QKD,
 for phase randomized Poissonian photon sources~\cite{MR12},
\begin{equation}
 Q_{\Bbasis,i,j} = \sum_{a,b=0}^{+\infty} \frac{\mu_{\Bbasis,i}^a
 \mu_{\Bbasis,j}^b Y_{\Bbasis,a,b} \exp(-\mu_{\Bbasis,i})
 \exp(-\mu_{\Bbasis,j})}{a!\ b!}
 \label{E:Q_mu_def}
\end{equation}
 and
\begin{align}
 & Q_{\Bbasis,i,j} E_{\Bbasis,i,j} \nonumber \\
 ={} & \sum_{a,b=0}^{+\infty} \frac{\mu_{\Bbasis,i}^a \mu_{\Bbasis,j}^b
  Y_{\Bbasis,a,b} e_{\Bbasis,a,b} \exp(-\mu_{\Bbasis,i})
  \exp(-\mu_{\Bbasis,j})}{a!\ b!} .
 \label{E:E_mu_def}
\end{align}
 Here, $Y_{\Bbasis,a,b}$ is the probability of conclusive detection by Charlie
 given that the photon pulses sent by Alice (Bob) contains $a$ ($b$)~photons
 and $e_{\Bbasis,a,b}$ is the corresponding bit error rate of the raw key.
 Furthermore, I denote the yield conditioned on Alice preparing a vacuum state
 and Bob preparing in the $\Bbasis$-basis by the symbol
 $Y_{\Bbasis,0,\star}$.  Clearly, $Y_{\Bbasis,0,\star}$ obeys
\begin{equation}
 Y_{\Bbasis,0,\star} = \sum_{j=1}^{k_\Bbasis} p_{j\mid\Bbasis}
 \tilde{Y}_{\Bbasis,0,j} ,
 \label{E:Y_0*_relation}
\end{equation}
 where $\tilde{Y}_{\Bbasis,0,j}$ is the yield conditioned on Alice sending the
 vacuum state and Bob sending photon with intensity $\mu_{\Bbasis,j}$ in the
 $\Bbasis$-basis.

 I need to deduce the possible values of $Y_{\Bbasis,i,j}$'s and
 $Y_{\Bbasis,i,j} e_{\Bbasis,i,j}$ from Eqs.~\eqref{E:Q_mu_def}
 and~\eqref{E:E_mu_def}.  One way to do it is to compute various lower and
 upper bounds of $Y_{\Bbasis,i,j}$'s and $Y_{\Bbasis,i,j} e_{\Bbasis,i,j}$ by
 brute force optimization of truncated versions of Eqs.~\eqref{E:Q_mu_def}
 and~\eqref{E:E_mu_def} like the method reported in
 Refs.~\cite{MFR12,CXCLTL14,XXL14}.  However, this approach is rather
 inelegant and ineffective.
 Further note that Alice and Bob have no control on the values of
 $Y_{\Bbasis,a,b}$'s and $e_{\Bbasis,a,b}$'s since Charlie and Eve are not
 trustworthy.  All they know is that these variables are between 0 and 1.
 Fortunately, in the case of phase-randomized Poissonian distributed light
 source, Corollaries~\ref{Cor:Y0*} and~\ref{Cor:bounds_on_Yxx} in the Appendix
 can be used to bound $Y_{\Bbasis,0,\star}, Y_{\Bbasis,1,1}, Y_{\Bbasis,1,1}
 e_{\Bbasis,1,1}$ and $Y_{\Bbasis,1,1} \bar{e}_{\Bbasis,1,1}$ analytically,
 where $\bar{e}_{\Bbasis,1,1} \equiv 1 - e_{\Bbasis,1,1}$.  More importantly,
 these bounds are effective to analyze the key rate formula to be reported
 in Sec.~\ref{Sec:Rate}.
 Following the trick used in Refs.~\cite{Chau18,CN20}, by using the statistics
 of either all the $k_{\Bbasis}$ different photon intensities or all but the
 largest one used by Alice and Bob depending on the parity of $k_{\Bbasis}$,
 Corollaries~\ref{Cor:Y0*} and~\ref{Cor:bounds_on_Yxx} imply the following
 tight bounds
\begin{subequations}
\label{E:various_Y_and_e_bounds}
\begin{equation}
 Y_{\Bbasis,0,\star} \ge \sum_{i,j=1}^{k_\Bbasis} p_{j\mid\Bbasis}
 {\mathcal A}_{\Bbasis,0,i}^\even Q_{\Bbasis,i,j} ,
 \label{E:Y0*_bound}
\end{equation}
\begin{equation}
 Y_{\Bbasis,1,1} \ge Y_{\Bbasis,1,1}^\downarrow \equiv
 \sum_{i,j=1}^{k_\Bbasis} {\mathcal A}_{\Bbasis,1,i}^\odd
 {\mathcal A}_{\Bbasis,1,j}^\odd Q_{\Bbasis,i,j} - C_{\Bbasis,2}^2 ,
 \label{E:Y11_bound}
\end{equation}
\begin{equation}
 Y_{\Bbasis,1,1} e_{\Bbasis,1,1} \le \left( Y_{\Bbasis,1,1} e_{\Bbasis,1,1}
 \right)^\uparrow \equiv \sum_{i,j=1}^{k_\Bbasis}
 {\mathcal A}_{\Bbasis,1,i}^\even {\mathcal A}_{\Bbasis,1,j}^\even
 Q_{\Bbasis,i,j} E_{\Bbasis,i,j} ,
 \label{E:Ye11_bound}
\end{equation}
\begin{align}
 Y_{\Bbasis,1,1} e_{\Bbasis,1,1} &\ge{} \left( Y_{\Bbasis,1,1} e_{\Bbasis,1,1}
  \right)^\downarrow \nonumber \\
 &\equiv \sum_{i,j=1}^{k_\Bbasis} {\mathcal A}_{\Bbasis,1,i}^\odd
  {\mathcal A}_{\Bbasis,1,j}^\odd Q_{\Bbasis,i,j} E_{\Bbasis,i,j} -
  C_{\Bbasis,2}^2 ,
 \label{E:Ye11_special_bound}
\end{align}
 and
\begin{align}
 Y_{\Bbasis,1,1} \bar{e}_{\Bbasis,1,1} &\ge \left( Y_{\Bbasis,1,1}
  \bar{e}_{\Bbasis,1,1} \right)^\downarrow \nonumber \\
 &\equiv \sum_{i,j=1}^{k_\Bbasis} {\mathcal A}_{\Bbasis,1,i}^\odd
  {\mathcal A}_{\Bbasis,1,j}^\odd Q_{\Bbasis,i,j} \bar{E}_{\Bbasis,i,j} -
  C_{\Bbasis,2}^2
 \label{E:Y11bare1_bound}
\end{align}
\end{subequations}
 for $\Bbasis = \Xbasis, \Zbasis$.
 (The reason for using $\even$ and $\odd$ as superscripts is that it
 will be self-evident from the discussion below that for fixed $\Bbasis$ and
 $j$, there are even number of non-zero terms in
 ${\mathcal A}^\even_{\Bbasis,j,i}$ and odd number of non-zero terms in
 ${\mathcal A}^\odd_{\Bbasis,j,i}$.)
 For the above inequalities, in case $k_\Bbasis$ is even, then
\begin{subequations}
\label{E:various_A_ai}
\begin{equation}
 {\mathcal A}_{\Bbasis,j,i}^\even = A_j(\mu_{\Bbasis,i},\{ \mu_{\Bbasis,1},
 \mu_{\Bbasis,2}, \cdots, \mu_{\Bbasis,i-1}, \mu_{\Bbasis,i+1},\cdots,
 \mu_{\Bbasis,k_\Bbasis} \})
 \label{E:A_xi_even_def_for_even}
\end{equation}
 for $i=1,2,\cdots,k_\Bbasis$ and $j=0,1$.  Furthermore,
\begin{equation}
 {\mathcal A}_{\Bbasis,1,1}^\odd = 0
 \label{E:A_11_odd_def_for_even}
\end{equation}
 and
\begin{equation}
 {\mathcal A}_{\Bbasis,1,i}^\odd = A_1(\mu_{\Bbasis,i},\{ \mu_{\Bbasis,2},
 \mu_{\Bbasis,3},\cdots, \mu_{\Bbasis,i-1}, \mu_{\Bbasis,i+1},\cdots,
 \mu_{\Bbasis,k_\Bbasis} \})
 \label{E:A_1i_odd_def_for_even}
\end{equation}
 for $i=2,3,\cdots,k_\Bbasis$.  In addition,
\begin{widetext}
\begin{equation}
 C_{\Bbasis,2} = \left( \sum_{\ell=2}^{k_\Bbasis} \mu_{\Bbasis,2}
 \mu_{\Bbasis,3} \cdots \mu_{\Bbasis,\ell-1} \mu_{\Bbasis,\ell+1} \cdots
 \mu_{\Bbasis,k_\Bbasis} \right) \sum_{i=2}^{k_\Bbasis} \left\{
 \frac{1}{\mu_{\Bbasis,i} \prod_{t\ne 1,i} (\mu_{\Bbasis,i} -
 \mu_{\Bbasis,t})} \left[ \exp(\mu_{\Bbasis,i}) - \sum_{j=0}^{k_\Bbasis-2}
 \frac{\mu_{\Bbasis,i}^j}{j!} \right] \right\} .
 \label{E:C2_def_even}
\end{equation}
\end{widetext}
 Here I use the convention that the term involving $1/\mu_{\Bbasis,i}$ in the
 above summards with dummy index $i$ is equal to $0$ if $\mu_{\Bbasis,i} = 0$.

 Whereas in case $k_\Bbasis$ is odd, then 
\begin{equation}
 {\mathcal A}_{\Bbasis,1,i}^\odd = A_1(\mu_{\Bbasis,i},\{ \mu_{\Bbasis,1},
 \mu_{\Bbasis,2}, \cdots, \mu_{\Bbasis,i-1}, \mu_{\Bbasis,i+1},\cdots,
 \mu_{\Bbasis,k_\Bbasis} \})
 \label{E:A_1i_odd_def_for_odd}
\end{equation}
 for $i=1,2,\cdots,k_\Bbasis$.  Furthermore,
\begin{equation}
 {\mathcal A}_{\Bbasis,j,1}^\even = 0
 \label{E:A_x1_even_def_for_odd}
\end{equation}
 and
\begin{equation}
 {\mathcal A}_{\Bbasis,j,i}^\even = A_j(\mu_{\Bbasis,i},\{ \mu_{\Bbasis,2},
 \mu_{\Bbasis,3},\cdots, \mu_{\Bbasis,i-1}, \mu_{\Bbasis,i+1},\cdots,
 \mu_{\Bbasis,k_\Bbasis} \})
 \label{E:A_xi_even_def_for_odd}
\end{equation}
 for $j=1,2$ and $i=2,3,\cdots,k_\Bbasis$.  In addition,
\begin{widetext}
\begin{equation}
 C_{\Bbasis,2} = \left( \sum_{\ell=1}^{k_\Bbasis} \mu_{\Bbasis,1}
 \mu_{\Bbasis,2} \cdots \mu_{\Bbasis,\ell-1} \mu_{\Bbasis,\ell+1} \cdots
 \mu_{\Bbasis,k_\Bbasis} \right) \sum_{i=1}^{k_\Bbasis} \left\{
 \frac{1}{\mu_{\Bbasis,i} \prod_{t\ne i} (\mu_{\Bbasis,i} - \mu_{\Bbasis,t})}
 \left[ \exp(\mu_{\Bbasis,i}) - \sum_{j=0}^{k_\Bbasis-1}
 \frac{\mu_{\Bbasis,i}^j}{j!} \right] \right\} .
 \label{E:C2_def_odd}
\end{equation}
\end{widetext}
\end{subequations}

 Note that in Eq.~\eqref{E:various_A_ai},
\begin{subequations}
\label{E:generators_def}
\begin{equation}
 A_0(\mu,S) = \frac{\displaystyle -\exp(\mu) \prod_{s\in S}
 s}{\displaystyle \prod_{s\in S} (\mu - s)}
 \label{E:A_00_template}
\end{equation}
 and
\begin{equation}
 A_1(\mu,S) = \frac{\displaystyle -\exp(\mu) \sum_{s\in S} \left(
 \prod_{\substack{s'\in S \\ s' \ne s}} s' \right)}{\displaystyle
 \prod_{s\in S} (\mu - s)} .
 \label{E:A_11_template}
\end{equation}
\end{subequations}

 Note that different upper and lower bounds for $Y_{\Bbasis,1,1}$ have
 been obtained using similar Vandermonde matrix inversion technique in
 Ref.~\cite{YZLM16}.  The differences between those bounds and the actual
 value of $Y_{\Bbasis,1,1}$ depend on the yields $Y_{\Bbasis,i,j}$ with $i,j
 \ge 1$.  In contrast, the difference between the bound in
 Inequalities~\eqref{E:Y11_bound} and the actual value of $Y_{\Bbasis,1,1}$
 depend on $Y_{\Bbasis,i,j}$ with $i,j\ge k_\Bbasis$.  Thus,
 Inequality~\eqref{E:Y11_bound} and similarly also
 Inequality~\eqref{E:Ye11_bound} give more accurate estimates of
 $Y_{\Bbasis,1,1}$ and $Y_{\Bbasis,1,1} e_{\Bbasis,1,1}$, respectively.
 Furthermore, the bounds in Ref.~\cite{YZLM16} also work for the case of
 $\mu_{\Bbasis,k_{\Bbasis}} = 0$.  It is also not clear how to extend their
 method to bound for yields other than the two-single-photon events that are
 needed in computing the key rate for twin-field~\cite{LYDS18} and
 phase-matching~\cite{MZZ18} \MDI-\QKD{s}.

\section{The Key Rate Formula}
\label{Sec:Rate}
 The secure key rate $R$ is defined as the number of bits of secret key shared
 by Alice and Bob at the end of the protocol divided by the number of photon
 pulse pairs they have sent to Charlie.  In fact, the derivation of the key
 rate formula in Refs.~\cite{LCWXZ14,TL17,Chau18,CN20} for the case of
 standard \QKD can be easily modified to the case of \MDI-\QKD by making the
 following correspondences.
 (See also the key rate formula used in Ref.~\cite{MR12} for \MDI-\QKD.)
 The vacuum event in the standard \QKD is mapped to the event that both Alice
 and Bob send a vacuum photon pulse to Charlie.  The single photon event is
 mapped to the event that both Alice and Bob send a single photon to Charlie.
 The multiple photon event is mapped to the event that Alice and Bob are both
 sending neither a vacuum nor a single photon pulse to Charlie.  In the case
 of forward reconciliation, the result is
\begin{widetext}
\begin{equation}
 R \ge p_\Zbasis^2 \left\{ \langle \exp(-\mu) \rangle_{\Zbasis}
 Y_{\Zbasis,0,\star} + \langle \mu \exp(-\mu) \rangle_{\Zbasis}^2
 Y_{\Zbasis,1,1} [1-H_2(e_p)] - \Lambda_\text{EC} - \frac{\langle
 Q_{\Zbasis,i,j} \rangle_{i,j}}{\ell_\text{raw}} \left[ 6\log_2 \left(
 \frac{\chi}{\epsilon_\text{sec}} \right) + \log_2 \left(
 \frac{2}{\epsilon_\text{cor}} \right) \right] \right\} ,
 \label{E:key_rate_basic}
\end{equation}
\end{widetext}
 where $\langle f(\mu)\rangle_{\Zbasis} \equiv \sum_{i=1}^{k_\Zbasis}
 p_{i\mid\Zbasis} f(\mu_{\Zbasis,i})$, $\langle f(\Zbasis,i,j) \rangle_{i,j}
 \equiv \sum_{i,j=1}^{k_\Zbasis} p_{i\mid\Zbasis} p_{j\mid\Zbasis}
 f(\Zbasis,i,j)$, $H_2(x) \equiv -x \log_2 x - (1-x) \log_2 (1-x)$ is the
 binary entropy function, $e_p$ is the phase error rate of the single photon
 events in the raw key, and $\Lambda_\text{EC}$ is the actual number of bits
 of information that leaks to Eve as Alice and Bob perform error correction on
 their raw bits.  It is given by
\begin{equation}
 \Lambda_\text{EC} = f_\text{EC} \langle Q_{\Zbasis,i,j} H_2(E_{\Zbasis,i,j})
 \rangle_{i,j}
 \label{E:information_leakage}
\end{equation}
 where $f_\text{EC} \ge 1$ measures the inefficiency of the error-correcting
 code used.  In addition, $\ell_\text{raw}$ is the raw sifted key length
 measured in bits, $\epsilon_\text{cor}$ is the upper bound of the probability
 that the final keys shared between Alice and Bob are different, and
 $\epsilon_\text{sec} = (1- p_\text{abort}) \| \rho_\text{AE} - U_\text{A}
 \otimes \rho_\text{E} \|_1 / 2$.
 Here $p_\text{abort}$ is the chance that the scheme aborts without generating
 a key, $\rho_\text{AE}$ is the classical-quantum state describing the joint
 state of Alice and Eve, $U_\text{A}$ is the uniform mixture of all the
 possible raw keys created by Alice, $\rho_\text{E}$ is the reduced density
 matrix of Eve, and $\| \cdot \|_1$ is the trace
 norm~\cite{Renner05,KGR05,RGK05}.
 In other words, Eve has at most $\epsilon_\text{sec}$~bits of information on
 the final secret key shared by Alice and Bob.
 (In the literature, this is often referred to it as a
 $\epsilon_\text{cor}$-correct and $\epsilon_\text{sec}$-secure \QKD
 scheme~\cite{TL17}.)
 Last but not least, $\chi$ is a \QKD scheme specific factor which depends on
 the detailed security analysis used.
 In general, $\chi$ may also depend on other factors used in the \QKD scheme
 such as the number of photon intensities $k_{\Xbasis}$ and
 $k_{\Zbasis}$~\cite{LCWXZ14,Chau18,CN20}.

 In Inequality~\eqref{E:key_rate_basic}, the phase error of the raw key $e_p$
 obeys~\cite{Chau18,FMC10}
\begin{widetext}
\begin{equation}
 e_p \le e_{\Xbasis,1,1} + \bar{\gamma} \left(
 \frac{\epsilon_\text{sec}}{\chi}, e_{\Xbasis,1,1}, \frac{s_\Xbasis
 Y_{\Xbasis,1,1} \langle \mu \exp(-\mu) \rangle_{\Xbasis}^2}{\langle
 Q_{\Xbasis,i,j} \rangle_{i,j}}, \frac{s_\Zbasis Y_{\Zbasis,1,1} \langle \mu
 \exp(-\mu) \rangle_{\Zbasis}^2}{\langle Q_{\Zbasis,i,j} \rangle_{i,j}}
 \right)
 \label{E:e_p_bound}
\end{equation}
\end{widetext}
 with probability at least $1-\epsilon_\text{sec}/\chi$, where $\langle f(\mu)
 \rangle_\Xbasis \equiv \sum_{i=1}^{k_\Xbasis} p_{i\mid\Xbasis}
 f(\mu_{\Xbasis,i})$,
\begin{equation}
 \bar{\gamma}(a,b,c,d) \equiv \sqrt{\frac{(c+d)(1-b)b}{c d} \ \ln \left[
  \frac{c+d}{2\pi c d (1-b)b a^2} \right]} ,
 \label{E:gamma_def}
\end{equation}
 and $s_\Bbasis$ is the number of bits that are prepared and measured in
 $\Bbasis$ basis.
 Clearly, $s_\Zbasis = \ell_\text{raw}$ and $s_\Xbasis \approx p_\Xbasis^2
 s_\Zbasis \langle Q_{\Xbasis,i,j} \rangle_{i,j} / (p_\Zbasis^2 \langle
 Q_{\Zbasis,i,j} \rangle_{i,j})$.
 I also remark that $\bar{\gamma}$ becomes complex if $a,c,d$ are too large.
 This is because in this case no $e_p \ge e_{\Xbasis,1,1}$ exists with failure
 probability $a$.  In this work, all parameters are carefully picked so that
 $\bar{\gamma}$ is real.

 There are two ways to proceed.  The most general way is to directly find a
 lower bound for $Y_{\Zbasis,1,1}$.
 Specifically, by substituting Inequalities~\eqref{E:Y0*_bound},
 \eqref{E:Y11_bound} and~\eqref{E:e_p_bound} into
 Inequality~\eqref{E:key_rate_basic}, I obtain the following lower bound of
 the key rate
\begin{align}
 R &\ge \sum_{i,j=1}^{k_\Zbasis} {\mathcal B}_{\Zbasis,i,j} Q_{\Zbasis,i,j} -
  p_\Zbasis^2 \left\{ \vphantom{\frac{\chi}{\epsilon_\text{sec}}} \langle \mu
  \exp(-\mu) \rangle_\Zbasis^2 C_{\Zbasis,2}^2 [1 - H_2(e_p)] \right.
  \nonumber \\
 &\quad \left. + \Lambda_\text{EC} + \frac{\langle Q_{\Zbasis,i,j}
  \rangle_{i,j}}{\ell_\text{raw}} \left[ 6\log_2 \left(
  \frac{\chi}{\epsilon_\text{sec}} \right) + \log_2 \left(
  \frac{2}{\epsilon_\text{cor}} \right) \right] \right\} ,
 \label{E:key_rate_asym}
\end{align}
 where
\begin{align}
 {\mathcal B}_{\Zbasis,i,j} &= p_\Zbasis^2 \left\{ \langle \exp(-\mu)
  \rangle_{\Zbasis} {\mathcal A}_{\Zbasis,0,i}^\even p_{j\mid\Zbasis} \right.
  \nonumber \\
 &\quad \left. + \langle \mu \exp(-\mu) \rangle_{\Zbasis}^2
  {\mathcal A}_{\Zbasis,1,i}^\odd {\mathcal A}_{\Zbasis,1,j}^\odd [1-H_2(e_p)]
  \right\} .
 \label{E:b_n_def}
\end{align}

 Here I would like to point out that unlike the corresponding key rate
 formulae for standard \QKD in Refs.~\cite{Hayashi07,LCWXZ14,Chau18,CN20}, a
 distinctive feature of the key rate formula for \MDI-\QKD in
 Eq.~\eqref{E:key_rate_asym} is the presence of the $C_{\Zbasis,2}^2$ term.
 From Eq.~\eqref{E:various_A_ai}, provided that $\mu_{\Zbasis,i} -
 \mu_{\Zbasis,i+1}$ are all greater than a fixed positive number, the value of
 $C_{\Zbasis,2}^2$ decreases with $k_\Zbasis$.  This is the reason why the
 \MDI version of a \QKD scheme may require more decoys to attain a key rate
 comparable to the corresponding standard \QKD scheme.

 There is an alternative way to obtain the key rate formula discovered by
 Zhou \emph{et al.}~\cite{ZYW16} that works for BB84~\cite{BB84} and the
 six-state scheme~\cite{B98}.  Suppose the photon pulses prepared by Alice and
 Bob in Step~\ref{Scheme:prepare} of the \MDI-\QKD protocol in
 Sec.~\ref{Sec:Protocol} both contain a single photon.  Suppose further that
 they are prepared in the same basis.  Then, from Charlie and Eve's point of
 view, this two-single-photon state are the same irrespective of their
 preparation basis.  Consequently, $Y_{\Xbasis,1,1} = Y_{\Zbasis,1,1}$ (even
 though $e_{\Xbasis,1,1}$ need not equal $e_{\Zbasis,1,1}$).  That is to say,
 the secure key rate in Inequality~\eqref{E:key_rate_basic} also holds if
 $Y_{\Zbasis,1,1}$ there is replaced by $Y_{\Xbasis,1,1}$.  (Here I stress
 that the key generation basis is still $\Zbasis$.  But as $Y_{\Xbasis,1,1} =
 Y_{\Zbasis,1,1}$, I could use the bound on $Y_{\Xbasis,1,1}$ to obtain an
 alternative key rate formula for the same \MDI-\QKD scheme.)
 Following the same procedure above, I get
\begin{align}
 R &\ge \sum_{i,j=1}^{k_\Zbasis} {\mathcal B}'_{\Zbasis,i,j} Q_{\Zbasis,i,j} +
  \sum_{i,j=1}^{k_\Xbasis} {\mathcal B}_{\Xbasis,i,j} Q_{\Xbasis,i,j}
  \nonumber \\
 &\quad - p_\Zbasis^2 \left\{ \vphantom{\frac{\chi}{\epsilon_\text{sec}}}
  \langle \mu \exp(-\mu) \rangle_\Zbasis^2 C_{\Xbasis,2}^2 [1 - H_2(e_p)] +
  \Lambda_\text{EC} \right. \nonumber \\
 &\quad \left. + \frac{\langle Q_{\Zbasis,i,j} \rangle_{i,j}}{\ell_\text{raw}}
  \left[ 6\log_2 \left( \frac{\chi}{\epsilon_\text{sec}} \right) + \log_2
  \left( \frac{2}{\epsilon_\text{cor}} \right) \right] \right\} ,
 \label{E:key_rate_asym_alt}
\end{align}
 where
\begin{subequations}
\begin{equation}
 {\mathcal B}'_{\Zbasis,i,j} = p_\Zbasis^2 \langle \exp(-\mu)
 \rangle_{\Zbasis} {\mathcal A}_{\Zbasis,0,i}^\even p_{j\mid\Zbasis}
 \label{E:bprime_n_def}
\end{equation}
 and
\begin{equation}
 {\mathcal B}_{\Xbasis,i,j} = p_\Zbasis^2 \langle \mu \exp(-\mu)
 \rangle_{\Zbasis}^2 {\mathcal A}_{\Xbasis,1,i}^\odd
 {\mathcal A}_{\Xbasis,1,j}^\odd [1-H_2(e_p)] .
 \label{E:bX_n_def}
\end{equation}
\end{subequations}
 
\section{Treatments Of Phase Error And Statistical Fluctuation Due To Finite
 Raw Key Length On The Secure Key Rate}
\label{Sec:Finite_Size}
 In order to compute the lower bound on the key rate $R$ in
 Inequalities~\eqref{E:key_rate_asym} and~\eqref{E:key_rate_asym_alt}, I need
 to know the value of $e_{\Xbasis,1,1}$ through the
 Inequality~\eqref{E:e_p_bound}.  More importantly, I need to take into
 consideration the effects of finite raw key length on the key rate $R$ due to
 the statistical fluctuations in $e_{\Xbasis,1,1}$ and $Q_{\Zbasis,i,j}$'s.
 Here I do so by means of a McDiarmid-type of inequality in statistics first
 proven in Refs.~\cite{McDiarmid,McDiarmid1} and recently extended in
 Ref.~\cite{CN20}.

 Fluctuation of the first term in the R.H.S. of
 Inequality~\eqref{E:key_rate_asym}
 due to finite raw key length can be handled by Hoeffding inequality for
 hypergeometrically distributed random
 variables~\cite{Hoeffding,LCWXZ14,Chau18,CN20}, which is a special case of
 the McDiarmid inequality.  Using the technique reported in
 Refs.~\cite{Chau18,CN20}, the first term in the R.H.S. of
 Inequality~\eqref{E:key_rate_asym} can be regarded as a sum of $s_\Zbasis$
 hypergeometrically distributed random variables each taking on values from
 the set $\{ \langle Q_{\Zbasis,i,j} \rangle_{i,j} {\mathcal B}_{\Zbasis,i,j}
 / (p_{i\mid\Zbasis} p_{j\mid\Zbasis}) \}_{i,j=1}^{k_\Zbasis}$.  Using
 Hoeffding inequality for hypergeometrically distributed random
 variables~\cite{Hoeffding}, I conclude that the measured value of $\sum_{i,j}
 {\mathcal B}_{ij} Q_{\Zbasis,i,j}$ minus its actual value is greater than
 $\langle Q_{\Zbasis,i,j} \rangle_{i,j} \left[ \frac{\ln
 (\chi/\epsilon_\text{sec})}{2 s_\Zbasis} \right]^{1/2} \Width \left( \left\{
 \frac{{\mathcal B}_{\Zbasis,i,j}}{p_{i\mid\Zbasis} p_{j\mid\Zbasis}}
 \right\}_{i,j=1}^{k_\Zbasis} \right)$ with probability at most
 $\epsilon_\text{sec} / \chi$, where $\Width$ of a finite set of real numbers
 $S$ is defined as $\max S - \min S$.

 The value of $e_{\Xbasis,1,1}$ in the finite sampling size situation is more
 involved.  Here I adapt the recent results in Ref.~\cite{CN20} to give four
 upper bounds on $e_{\Xbasis,1,1}$.  Surely, I pick the best upper bound out
 of these four in the key rate analysis.
 The first step is to use the equality
\begin{subequations}
\label{E:e_Z11_identities}
\begin{align}
 e_{\Xbasis,1,1} &= \frac{Y_{\Xbasis,1,1} e_{\Xbasis,1,1}}{Y_{\Xbasis,1,1}}
  \label{E:e_Z11_identity1} \\
 &= \frac{Y_{\Xbasis,1,1} e_{\Xbasis,1,1}}{Y_{\Xbasis,1,1} e_{\Xbasis,1,1} +
  Y_{\Xbasis,1,1} \bar{e}_{\Xbasis,1,1}} .
  \label{E:e_Z11_identity2}
\end{align}
\end{subequations}
 To get the first two upper bounds of $e_{\Xbasis,1,1}$, I follow
 Ref.~\cite{CN20} by using Inequalities~\eqref{E:Y11_bound},
 \eqref{E:Ye11_bound} and~\eqref{E:Y11bare1_bound} together with applying
 Hoeffding inequality for hypergeometrically distributed random variables to
 study the statistical fluctuations of $\sum_{i,j=1}^{k_\Xbasis}
 {\mathcal A}_{\Xbasis,1,i}^\even {\mathcal A}_{\Xbasis,1,j}^\even
 Q_{\Xbasis,i,j} E_{\Xbasis,i,j}$, $\sum_{i,j=1}^{k_\Xbasis}
 {\mathcal A}_{\Xbasis,1,i}^\odd {\mathcal A}_{\Xbasis,1,j}^\odd
 Q_{\Xbasis,i,j}$ and $\sum_{i,j=1}^{k_\Xbasis}
 {\mathcal A}_{\Xbasis,1,i}^\odd {\mathcal A}_{\Xbasis,1,j}^\odd
 Q_{\Xbasis,i,j} \bar{E}_{\Xbasis,i,j}$.  The result is
\begin{equation}
 e_{\Xbasis,1,1} \le \frac{\left( Y_{\Xbasis,1,1} e_{\Xbasis,1,1}
 \right)^\uparrow + \Delta Y_{\Xbasis,1,1}
 e_{\Xbasis,1,1}}{Y_{\Xbasis,1,1}^\downarrow - \Delta Y_{\Xbasis,1,1}}
 \label{E:e_Z11_bound1}
\end{equation}
 and
\begin{align}
 e_{\Xbasis,1,1} &\le \left[ \left( Y_{\Xbasis,1,1} e_{\Xbasis,1,1}
  \right)^\uparrow + \Delta Y_{\Xbasis,1,1} e_{\Xbasis,1,1} \right] \left[
  \left( Y_{\Xbasis,1,1} e_{\Xbasis,1,1} \right)^\uparrow \right. \nonumber \\
 &\quad \left. + \left( Y_{\Xbasis,1,1} \bar{e}_{\Xbasis,1,1}
  \right)^\downarrow + \Delta Y_{\Xbasis,1,1} e_{\Xbasis,1,1} - \Delta
  Y_{\Xbasis,1,1} \bar{e}_{\Xbasis,1,1} \right]^{-1}
 \label{E:e_Z11_bound2}
\end{align}
 each with probability at least $1-2\epsilon_\text{sec}/\chi$, where
\begin{subequations}
\label{E:finite-size_Ye_s}
\begin{align}
 \Delta Y_{\Xbasis,1,1} e_{\Xbasis,1,1} &= \left[ \frac{\langle
  Q_{\Xbasis,i,j} \rangle_{i,j} \langle Q_{\Xbasis,i,j} E_{\Xbasis,i,j}
  \rangle_{i,j} \ln (\chi/\epsilon_\text{sec})}{2 s_\Xbasis} \right]^{1/2}
  \times \nonumber \\
 &\qquad \Width \left( \left\{ \frac{{\mathcal A}_{\Xbasis,1,i}^\even
  {\mathcal A}_{\Xbasis,1,j}^\even}{p_{i\mid\Xbasis} p_{j\mid\Xbasis}}
  \right\}_{i,j=1}^{k_\Xbasis} \right) ,
 \label{E:finite-size_Ye}
\end{align}
\begin{align}
 \Delta Y_{\Xbasis,1,1} &= \langle Q_{\Xbasis,i,j} \rangle_{i,j} \left[
  \frac{\ln (\chi/\epsilon_\text{sec})}{2 s_\Xbasis} \right]^{1/2} \times
  \nonumber \\
 &\qquad \Width \left( \left\{ \frac{{\mathcal A}_{\Xbasis,1,i}^\odd
  {\mathcal A}_{\Xbasis,1,j}^\odd}{p_{i\mid\Xbasis} p_{j\mid\Xbasis}}
  \right\}_{i,j=1}^{k_\Xbasis} \right)
 \label{E:finite-size_Y}
\end{align}
 and
\begin{align}
 \Delta Y_{\Xbasis,1,1} \bar{e}_{\Xbasis,1,1} &= \left[
  \frac{\langle Q_{\Xbasis,i,j} \rangle_{i,j} \langle Q_{\Xbasis,i,j}
  \bar{E}_{\Xbasis,i,j} \rangle_{i,j} \ln (\chi/\epsilon_\text{sec})}{2
  s_\Xbasis} \right]^{1/2} \times \nonumber \\
 &\qquad \Width \left( \left\{ \frac{{\mathcal A}_{\Xbasis,1,i}^\odd
  {\mathcal A}_{\Xbasis,1,j}^\odd}{p_{i\mid\Xbasis} p_{j\mid\Xbasis}}
  \right\}_{i,j=1}^{k_\Xbasis} \right) .
 \label{E:finite-size_Yebar}
\end{align}
\end{subequations}
 Note that in the above equations, $\langle f(\Xbasis,i,j) \rangle_{i,j}
 \equiv \sum_{i,j=1}^{k_\Xbasis} p_{i\mid\Xbasis} p_{j\mid\Xbasis}
 f(\Xbasis,i,j)$.

 Both the third and the fourth bounds of $e_{\Xbasis,1,1}$ use
 Eq.~\eqref{E:e_Z11_identity2}, Inequality~\eqref{E:Y11bare1_bound} and the
 modified McDiarmid inequality in Ref.~\cite{CN20}.  For the third one, the
 result is
\begin{align}
 e_{\Xbasis,1,1} &\le \frac{\left( Y_{\Xbasis,1,1} e_{\Xbasis,1,1}
  \right)^\uparrow}{\left( Y_{\Xbasis,1,1} e_{\Xbasis,1,1} \right)^\uparrow +
  \left( Y_{\Xbasis,1,1} \bar{e}_{\Xbasis,1,1} \right)^\downarrow - \Delta
  Y_{\Xbasis,1,1} \bar{e}_{\Xbasis,1,1}} \nonumber \\
 &\quad + \Delta e_{\Xbasis,1,1}
 \label{E:e_Z11_bound3}
\end{align}
 with probability at least $1 - 2\epsilon_\text{sec} / \chi$, where
\begin{widetext}
\begin{align}
 & \Delta e_{\Xbasis,1,1} \nonumber \\
 ={} & \left[ \frac{\langle Q_{\Xbasis,i,j}
  \rangle_{i,j} \langle Q_{\Xbasis,i,j} E_{\Xbasis,i,j} \rangle_{i,j} \ln
  (\chi/\epsilon_\text{sec})}{2 s_\Xbasis} \right]^{1/2} \left[ \left(
  Y_{\Xbasis,1,1} \bar{e}_{\Xbasis,1,1} \right)^\downarrow - \Delta
  Y_{\Xbasis,1,1} \bar{e}_{\Xbasis,1,1} \right] \Width \left( \left\{
  \frac{{\mathcal A}_{\Xbasis,1,i}^\even
  {\mathcal A}_{\Xbasis,1,j}^\even}{p_{i\mid\Xbasis} p_{j\mid\Xbasis}}
  \right\}_{i,j=1}^{k_\Xbasis} \right) \nonumber \\
 & \quad \times \left[ \left( Y_{\Xbasis,1,1} \bar{e}_{\Xbasis,1,1}
  \right)^\downarrow - \Delta Y_{\Xbasis,1,1} \bar{e}_{\Xbasis,1,1} + \left(
  Y_{\Xbasis,1,1} e_{\Xbasis,1,1} \right)^\downarrow \left( 1 -
  \frac{\langle Q_{\Xbasis,i,j} \rangle_{i,j}}{s_\Xbasis \langle
  Q_{\Xbasis,i,j} E_{\Xbasis,i,j} \rangle_{i,j}} \right) + \frac{\langle
  Q_{\Xbasis,i,j} \rangle_{i,j}^2}{s_\Xbasis^2 \langle Q_{\Xbasis,i,j}
  E_{\Xbasis,i,j} \rangle_{i,j}} \max_{i,j=1}^{k_\Xbasis} \left\{
  \frac{{\mathcal A}_{\Xbasis,1,i}^\even
  {\mathcal A}_{\Xbasis,1,j}^\even}{p_{i\mid\Xbasis} p_{j\mid\Xbasis}}
  \right\} \right]^{-1} \nonumber
  \displaybreak[1]
  \\
 & \quad \times \left[ \left( Y_{\Xbasis,1,1} \bar{e}_{\Xbasis,1,1}
  \right)^\downarrow - \Delta Y_{\Xbasis,1,1} \bar{e}_{\Xbasis,1,1} + \left(
  Y_{\Xbasis,1,1} e_{\Xbasis,1,1} \right)^\downarrow \left( 1 -
  \frac{\langle Q_{\Xbasis,i,j} \rangle_{i,j}}{s_\Xbasis \langle
  Q_{\Xbasis,i,j} E_{\Xbasis,i,j} \rangle_{i,j}} \right) + \frac{\langle
  Q_{\Xbasis,i,j} \rangle_{i,j}^2}{s_\Xbasis^2 \langle Q_{\Xbasis,i,j}
  E_{\Xbasis,i,j} \rangle_{i,j}} \min_{i,j=1}^{k_\Xbasis} \left\{
  \frac{{\mathcal A}_{\Xbasis,1,i}^\even
  {\mathcal A}_{\Xbasis,1,j}^\even}{p_{i\mid\Xbasis} p_{j\mid\Xbasis}}
  \right\} \right]^{-1} .
 \label{E:finite-size_e_Z11_shift3}
\end{align}
 And the fourth bound is
\begin{equation}
 e_{\Xbasis,1,1} \le \frac{\left( Y_{\Xbasis,1,1} e_{\Xbasis,1,1}
 \right)^\uparrow}{\left( Y_{\Xbasis,1,1} e_{\Xbasis,1,1} \right)^\uparrow +
 \left( Y_{\Xbasis,1,1} \bar{e}_{\Xbasis,1,1} \right)^\downarrow - \Delta
 Y_{\Xbasis,1,1} \bar{e}_{\Xbasis,1,1}} + \hat{r} \left[ \frac{\ln
 (\chi/\epsilon_\text{sec})}{2} \right]^{1/2}
 \label{E:e_Z11_bound4}
\end{equation}
 with probability at least $1 - 3\epsilon_\text{sec} / \chi$, where
\begin{align}
 \hat{r}^2 &\approx y^2 \sum_{m=1}^{k_\Xbasis^2} \frac{1}{w^{(m)}-x} \left( -
  \frac{1}{y+(t-\sum_{i<m} n^{(i)}+1)x+ \min {\mathcal W} + \sum_{i<m}
  n^{(i)} w^{(i)} + \mu[ w^{(m)}-x]} \right. \nonumber \\
 & \qquad - \frac{1}{y+(t-\sum_{i<m} n^{(i)}+1)x+\max {\mathcal W} +
  \sum_{i<m} n^{(i)} w^{(i)} + \mu[ w^{(m)}-x]} \nonumber \\
 & \qquad \left. \left. + \frac{2}{\Width({\mathcal W})} \ln \left\{
  \frac{y+(t-\sum_{i<m} n^{(i)}+1)x+\max {\mathcal W} + \sum_{i<m}
  n^{(i)} w^{(i)} + \mu [w^{(m)}-x]}{y+(t-\sum_{i<m} n^{(i)}+1)x+\min
  {\mathcal W} + \sum_{i<m} n^{(i)} w^{(i)} + \mu [w^{(m)}-x]} \right\}
 \right) \right|_{\mu = 0}^{n^{(m)}} .
 \label{E:finite-size_e_Z11_shift4}
\end{align}
\end{widetext}
 In the above equation,
\begin{subequations}
\begin{equation}
 y = \left( Y_{\Xbasis,1,1} \bar{e}_{\Xbasis,1,1} \right)^\downarrow - \Delta
 Y_{\Xbasis,1,1} \bar{e}_{\Xbasis,1,1} ,
 \label{E:y_def_shift4}
\end{equation}
\begin{equation} 
 t \approx \frac{s_\Xbasis \langle Q_{\Xbasis,i,j} E_{\Xbasis,i,j}
 \rangle_{i,j}}{\langle Q_{\Xbasis,i,j} \rangle_{i,j}} ,
 \label{E:t_def_shift4}
\end{equation}
\begin{equation}
 x = \frac{\left( Y_{\Xbasis,1,1} e_{\Xbasis,1,1} \right)^\downarrow - \Delta
 Y_{\Xbasis,1,1} e_{\Xbasis,1,1}}{t}
 \label{E:x_def_shift4}
\end{equation}
 and
\begin{equation}
 {\mathcal W} = \left\{ \frac{\langle Q_{\Xbasis,i',j'} \rangle_{i',j'}
 {\mathcal A}_{\Xbasis,1,i}^\even
 {\mathcal A}_{\Xbasis,1,j}^\even}{s_\Xbasis p_{i\mid\Xbasis}
 p_{j\mid\Xbasis}} \right\}_{i,j=1}^{k_\Xbasis}
 \label{E:W_def_shift4}
\end{equation}
\end{subequations}
 Last but not least, I need to define $w^{(m)}$ and $n^{(m)}$.
 Recall that by following the analysis in Ref.~\cite{CN20}, there is a one-one
 correspondence between a random variable in ${\mathcal W}$ taking the value
 of $\langle Q_{\Xbasis,i',j'} \rangle_{i',j'}
 {\mathcal A}_{\Xbasis,1,i}^\even {\mathcal A}_{\Xbasis,1,j}^\even /
 (s_\Xbasis p_{i\mid\Xbasis} p_{j\mid\Xbasis})$ and an event that a photon
 pulse pair is prepared by Alice (Bob) using intensity $\mu_{\Xbasis,i}$
 ($\mu_{\Xbasis,j}$) both in basis $\Xbasis$ and that the Bell basis
 measurement result announced by Charlie is inconsistent with the photon states
 prepared by Alice and Bob.
 Now let us arrange the $k_\Xbasis^2$ elements in the set ${\mathcal W}$ are
 arranged in descending order as $\{ w^{(1)}, w^{(2)}, \cdots,
 w^{(k_\Xbasis^2)} \}$.  Then, $n^{(i)}$ is the number of Bell basis
 measurement events that corresponds to the value of $w^{(i)} \in
 {\mathcal W}$.

 There is an important subtlety that requires attention.  In almost all cases
 of interest, each summard in Eq.~\eqref{E:finite-size_e_Z11_shift4} consists
 of three terms.  The first two are positive and the third one is negative.
 The sum of the first two terms almost exactly equal to the magnitude of the
 third term.  Hence, truncation error is very serious if one directly use
 Eq.~\eqref{E:finite-size_e_Z11_shift4} to numerically compute $\hat{r}$.
 The solution is to expand each term in powers of $1/D_m$ and/or $1/E_m$
 defined below.  This gives
\begin{align}
 & \hat{r}^2 \nonumber \\
 \approx{}& \frac{y^2 \Width({\mathcal W})^2}{3} \sum_{m=1}^{k_\Xbasis^2}
  \frac{n^{(m)}}{D_m E_m} \left( \frac{1}{D_m^2} + \frac{1}{D_m E_m} +
  \frac{1}{E_m^2} \right) ,
 \label{E:e_Z11_bound4_approx}
\end{align}
 where
\begin{subequations}
\begin{equation}
 D_m = y+ (t - \sum_{i<m} n^{(i)} + 1)x + \min {\mathcal W} + \sum_{i<m}
 n^{(i)} w^{(i)}
 \label{E:D_m_def}
\end{equation}
 and
\begin{align}
 E_m &= y+ (t - \sum_{i<m} n^{(i)} + 1)x + \min {\mathcal W} + \sum_{i<m}
  n^{(i)} w^{(i)} \nonumber \\
 &\quad + n^{(m)} (w^{(m)} - x) \nonumber \\
 &= y+ (t - \sum_{i\le m} n^{(i)} + 1)x + \min {\mathcal W} + \sum_{i\le m}
  n^{(i)} w^{(i)} .
 \label{E:E_m_def}
\end{align}
\end{subequations}
 (Note that only the leading term is kept in
 Eq.~\eqref{E:e_Z11_bound4_approx}.  This is acceptable because the next order
 term is of order of about $1/100$ that of the leading term in all cases of
 practical interest.)

 With all the above discussions, to summarize, the secure key rate $R$ of this
 $\epsilon_\text{cor}$-correct and $\epsilon_\text{sec}$-secure \QKD scheme in
 the finite raw key length situation is lower-bounded by
\begin{widetext}
\begin{subequations}
\begin{align}
 R &\ge \sum_{i,j=1}^{k_\Zbasis} {\mathcal B}_{\Zbasis,i,j} Q_{\Zbasis,i,j} -
  \langle Q_{\Zbasis,i,j} \rangle_{i,j} \left[ \frac{\ln (\chi /
  \epsilon_\text{sec})}{2s_\Zbasis} \right]^{1/2} \Width \left( \left\{
  \frac{{\mathcal B}_{\Zbasis,i,j}}{p_{i\mid\Zbasis} p_{j\mid\Zbasis}}
  \right\}_{i,j=1}^{k_\Zbasis} \right) - p_\Zbasis^2 \left\{
  \vphantom{\frac{\epsilon_\text{sec}}{s_\Zbasis}}
  \langle \mu \exp(-\mu) \rangle_\Zbasis^2 C_{\Zbasis,2}^2 [ 1 - H_2(e_p) ]
  \right. \nonumber \\
 &\quad \left. + f_\text{EC} \langle Q_{\Zbasis,i,j} H_s(E_{\Zbasis,i,j})
  \rangle_{i,j} + \frac{\langle Q_{\Zbasis,i,j}
  \rangle_{i,j}}{\ell_\text{raw}} \left[ 6\log_2 \left(
  \frac{\chi}{\epsilon_\text{sec}} \right) + \log_2 \left(
  \frac{2}{\epsilon_\text{cor}} \right) \right] \right\} .
 \label{E:key_rate_finite-size}
\end{align}
 and
\begin{align}
 R &\ge \sum_{i,j=1}^{k_\Zbasis} {\mathcal B}'_{\Zbasis,i,j} Q_{\Zbasis,i,j} +
  \sum_{i,j=1}^{k_\Xbasis} {\mathcal B}_{\Xbasis,i,j} Q_{\Xbasis,i,j} -
  \langle Q_{\Zbasis,i,j} \rangle_{i,j} \left[ \frac{\ln (\chi /
  \epsilon_\text{sec})}{2s_\Zbasis} \right]^{1/2} \Width \left( \left\{
  \frac{{\mathcal B}'_{\Zbasis,i,j}}{p_{i\mid\Zbasis} p_{j\mid\Zbasis}}
  \right\}_{i,j=1}^{k_\Zbasis} \right) - \langle Q_{\Xbasis,i,j} \rangle_{i,j}
  \times \nonumber \\
 &\qquad \left[ \frac{\ln (\chi / \epsilon_\text{sec})}{2 s_\Xbasis}
  \right]^{1/2} \Width \left( \left\{
  \frac{{\mathcal B}_{\Xbasis,i,j}}{p_{i\mid\Xbasis} p_{j\mid\Xbasis}}
  \right\}_{i,j=1}^{k_\Xbasis} \right) - p_\Zbasis^2 \left\{
  \vphantom{\frac{\epsilon_\text{sec}}{s_\Zbasis}}
  \langle \mu \exp(-\mu) \rangle_\Zbasis^2
  C_{\Xbasis,2}^2 [ 1 - H_2(e_p) ] + f_\text{EC} \langle Q_{\Zbasis,i,j}
  H_s(E_{\Zbasis,i,j}) \rangle_{i,j} \right. \nonumber \\
 &\qquad \left. + \frac{\langle Q_{\Zbasis,i,j}
  \rangle_{i,j}}{\ell_\text{raw}} \left[ 6\log_2 \left(
  \frac{\chi}{\epsilon_\text{sec}} \right) + \log_2 \left(
  \frac{2}{\epsilon_\text{cor}} \right) \right] \right\} .
 \label{E:key_rate_finite-size_alt}
\end{align}
\end{subequations}
\end{widetext}

 I remark that the R.H.S. of the above inequalities implicitly depends on
 $e_{\Xbasis,1,1}$ whose upper bound obeys
 Inequalities~\eqref{E:e_Z11_bound1}, \eqref{E:e_Z11_bound2},
 \eqref{E:e_Z11_bound3}, \eqref{E:e_Z11_bound4}
 and~\eqref{E:e_Z11_bound4_approx}.  Furthermore, when using the key rate in
 Inequality~\eqref{E:key_rate_finite-size}, $\chi = 9 = 4+1+4$ for the first
 three inequalities concerning $e_{\Xbasis,1,1}$ and $\chi = 10$ for the last
 inequality concerning $e_{\Xbasis,1,1}$~\cite{CN20}.  While using the key
 rate in Inequality~\eqref{E:key_rate_finite-size_alt} instead of
 Inequality~\eqref{E:key_rate_finite-size}, $\chi = 9, 10, 10, 11$ for
 Methods~A, B, C and D, respectively.  (For reason for $\chi$ to increase by
 1 except for Method~A by switching the rate formula from
 Inequality~\eqref{E:key_rate_finite-size} to
 Inequality~\eqref{E:key_rate_finite-size_alt} is due to the inclusion of the
 finite-size statistical fluctuations of the lower bound on $Y_{\Xbasis,1,1}$.)

 Compare with the corresponding key rate formula for standard \QKD scheme, the
 most noticeable difference is the presence of additional terms and factors
 involving $C_{\Bbasis,2}^2$ which tend to lower the key rate.  Fortunately,
 $C_{\Bbasis,2}$ roughly scale as $\mu_{\Bbasis,1}^{k_\Bbasis}$ so that in
 practice, these terms and factors are negligible if $k_\Bbasis \gtrsim 2$ to
 $3$.  Finally, I remark that the in the limit of $s_\Zbasis \to +\infty$, the
 key rate formulae in Inequalities~\eqref{E:key_rate_finite-size}
 and~\eqref{E:key_rate_finite-size_alt} are tight in the sense that these
 lower bound are reachable although the condition for attaining them is highly
 unlikely to occur in realistic channels.

\section{Performance Analysis}
\label{Sec:Performance}
 To study the key rate, I use the channel model reported by Ma and
 Razavi in Ref.~\cite{MR12}, which I called the \MR channel.  For this
 channel,
\begin{subequations}
\label{E:Channel_Model}
\begin{equation}
 Q_{\Xbasis,i,j} = 2\beta_{ij}^2 [1+2\beta_{ij}^2-4\beta_{ij}
 I_0(\alpha_{\Xbasis ij}) +I_0(2\alpha_{\Xbasis ij})] ,
 \label{E:Channel_Q_Xij}
\end{equation}
\begin{equation}
 Q_{\Xbasis,i,j} E_{\Xbasis,i,j} = e_0 Q_{\Xbasis,i,j} - 2(e_0 - e_d)
 \beta_{ij}^2 [I_0(2\alpha_{\Xbasis ij}) - 1] ,
 \label{E:Channel_QE_Xij}
\end{equation}
\begin{equation}
 Q_{\Zbasis,i,j} = Q^{(E)}_{ij} + Q^{(C)}_{ij}
 \label{E:Channel_Q_Zij}
\end{equation}
 and
\begin{equation}
 Q_{\Zbasis,i,j} E_{\Zbasis,i,j} = e_d Q^{(C)}_{ij} + (1-e_d) Q^{(E)}_{ij} ,
 \label{E:Channel_QE_Zij}
\end{equation}
 where $I_0(\cdot)$ is the modified Bessel function of the first kind,
\begin{equation}
 \alpha_{\Bbasis ij} = \frac{\sqrt{\eta_\text{A} \mu_{\Bbasis,i} \eta_\text{B}
 \mu_{\Bbasis,i}}}{2} ,
 \label{E:Channel_alpha_def}
\end{equation}
\begin{equation}
 \beta_{ij} = (1-p_d) \exp \left( -\frac{\eta_\text{A} \mu_{\Xbasis,i} +
 \eta_\text{B} \mu_{\Xbasis,j}}{4} \right) ,
 \label{E:Channel_beta_def}
\end{equation}
\begin{equation}
 e_0 = \frac{1}{2} ,
 \label{E:Channel_e0_def}
\end{equation}
\begin{align}
 Q^{(C)}_{ij} &= 2(1-p_d)^2 \exp \left( -\frac{\eta_\text{A} \mu_{\Zbasis,i} +
  \eta_\text{B} \mu_{\Zbasis,j}}{2} \right) \times \nonumber \\
 & \qquad \left[ 1 - (1-p_d) \exp \left( - \frac{\eta_\text{A}
  \mu_{\Zbasis,i}}{2} \right) \right] \times
  \nonumber \\
 & \qquad \left[ 1 - (1-p_d) \exp \left( - \frac{\eta_\text{B}
  \mu_{\Zbasis,j}}{2} \right) \right]
 \label{E:Channel_QC_def}
\end{align}
 and
\begin{align}
 Q^{(E)}_{ij} &= 2p_d(1-p_d)^2 \exp \left( - \frac{\eta_\text{A}
  \mu_{\Zbasis,i} + \eta_\text{B} \mu_{\Zbasis,j}}{2} \right) \times \nonumber
  \\
 & \qquad \left[ I_0(2\alpha_{\Zbasis ij}) - (1-p_d) \exp \left( -
  \frac{\eta_\text{A} \mu_{\Zbasis,i} + \eta_\text{B} \mu_{\Zbasis,j}}{2}
  \right) \right] .
 \label{E:Channel_QE_def}
\end{align}
 Here $e_d$ is the misalignment probability, $p_d$ is the dark count rate per
 detector.  Moreover, $\eta_\text{A}$ ($\eta_\text{B}$) is transmittance of
 the channel between Alice (Bob) and Charlie.  They are given by
\begin{equation}
 \eta_\text{A} = \eta_d 10^{-\eta_\text{att} L_\text{A} / 10}
 \label{E:Channel_eta_d_def}
\end{equation}
\end{subequations}
 and similarly for $\eta_\text{B}$, where $L_\text{A}$ is the length of the
 fiber connecting Alice to Charlie, $\eta_d$ is the detection efficiency of
 a detector, and $\eta_\text{att}$ is the transmission fiber loss constant.

 I remark that the \MR channel model assumes that the partial Bell state
 measurement is performed using linear optics with idea beam and/or
 polarization beam splitters.  It also assumes that all photon detectors are
 identical and that the dead time is ignored.  Moreover, this channel does not
 consider the use of quantum repeater.

\begin{figure}[t]
 \includegraphics[scale=0.9]{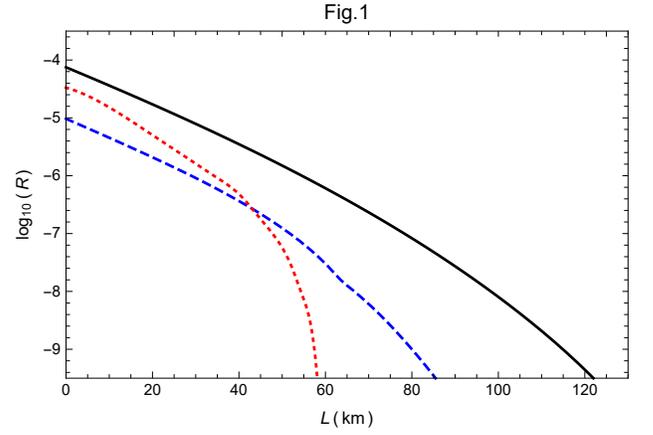}
 \caption{\label{F:Mao_cmp}  Provably secure optimized key rates $R$ as a
  function of distance $L$ between Alice and Bob for $N_t = 10^{10}$ and
  $\epsilon_\text{sec}/\chi = \epsilon_\text{cor} = 10^{-10}$.
  The red dotted curve is the state-of-the-art provably secure key rate
  reported in Ref.~\cite{MZZZZW18}.
  The black solid curve is the rate computed for $(3,2)_G$; and the blue
  dashed curve is rate for $(3,3)_R$.
  The rates for $(k_\Xbasis\ge 3,k_\Zbasis\ge 3)_G$ and $(k_\Xbasis\ge 4,
  k_\Zbasis\ge 4)_R$ is higher than that of $(3,2)_G$ by about 33\%.  But they
  are omitted here as those curves would visually almost overlap with the
  black solid curve in this semi-log plot.
 }
\end{figure}

 The state-of-the-art key rate formula for decoy-state \MDI-\QKD with finite
 raw key length is the one by Mao \emph{et al.} in Ref.~\cite{MZZZZW18}, which
 extended an earlier result by Zhou \emph{et al.} in Ref.~\cite{ZYW16}.  (Note
 that even higher key rates have been reported by Xu
 \emph{et al.}~\cite{XXL14} and Zhou \emph{et al.}~\cite{ZYW16}.  Note however
 that the first work applied brute force optimization as well as the Chernoff
 bound on a much longer raw key.  Its effectiveness in handling short raw key
 length situation is not apparent.  Whereas the second work assumed that the
 statistical fluctuation is Gaussian distributed which is not justified in
 terms of unconditional security.)
 To compare with the provably secure key rate reported in
 Ref.~\cite{MZZZZW18}, I use their settings by putting $e_d = 1.5\%$, $p_d =
 6.02\times 10^{-6}$, $\eta_\text{att} = 0.2$~db/km, $\eta_d = 14.5\%$,
 $f_\text{EC} = 1.16$, and $L_\text{A} = L_\text{B} = L/2$ where $L$ is the
 transmission distance between Alice and Bob.
 For the security parameters, I follow Ref.~\cite{MZZZZW18} by setting
 $\epsilon_\text{sec} / \chi = 10^{-10}$ although a more meaningful way is to
 set $\epsilon_\text{sec}$ divided by the length of the final key to a fixed
 small number~\cite{LCWXZ14}.  Whereas for $\epsilon_\text{cor}$, its value
 has not been specified in Refs.~\cite{MZZZZW18}.  Fortunately,
 Inequality~\eqref{E:key_rate_finite-size} implies that the provably secure
 key rate does not sensitively dependent on $\epsilon_\text{cor}$.  Here I
 simply set it to $10^{-10}$.

 Fig.~\ref{F:Mao_cmp} compares the key rates when the total number of photon
 pulse pairs prepared by Alice and Bob, $N_t \approx \ell_\text{raw} /
 (p_\Zbasis^2 \langle Q_{\Zbasis,i,j} \rangle_{i,j})$ is set to $10^{10}$.
 For each of the curves, the number of photon intensities $k_\Xbasis$
 ($k_\Zbasis)$ used for $\Xbasis$ ($\Zbasis$) are fixed.  The smallest photon
 intensities $\mu_{\Xbasis,k_\Xbasis}$ and $\mu_{\Zbasis,k_\Zbasis}$ are both
 set to be $10^{-6}$.  The optimized key rate is then calculated by varying
 the other $\mu_{\Xbasis,i}$'s, $\mu_{\Zbasis,i}$'s as well as
 $p_{i|\Xbasis}, p_{i|\Zbasis}$ and $p_Z$ by a combination of random sampling
 (with a minimum of $10^7$~samples to a maximum of about $10^9$~samples per
 data point on each curve) and adaptive gradient descend method (that is, the
 step size is adjusted dynamically to speed up the descend).
 For some of the curves, I introduce additional constraints that
 $\mu_{\Xbasis,i} = \mu_{\Zbasis,i}$ so as to reduce the number of different
 photon intensities used.  To aid discussion, I refer to the unconstrained and
 constrained situations by $(k_\Xbasis,k_\Zbasis)_G$ and
 $(k_\Xbasis,k_\Zbasis)_R$, respectively.

 The $R-L$~graphs in Fig.~\ref{F:Mao_cmp} clearly show the advantage of using
 the method in this text in computing the provably secure (optimized) key
 rate.
 The black curve, which is the distance-rate graph of $(3,2)_G$ that uses four
 different photon intensities, is much better than the red one (which also
 uses four different photon intensities) originally reported in
 Ref.~\cite{MZZZZW18}.  In fact, for any distance $L$ between Alice and Bob,
 the key rate of the $(3,2)_G$ method is at least $2.25$~times that of the
 state-of-the-art key rate reported in Ref.~\cite{MZZZZW18}.
 (I also mention on passing that the rate of the black curve is even higher
 than that of the two decoy key rate reported in Ref.~\cite{XXL14} using a
 much longer raw key length of $\ell_\text{raw} = 10^{12}$.)
 Besides, the $(3,2)_G$ method extends the working distance between Alice and
 Bob from slightly less than 60~km to slightly over 130~km.
 The blue dashed curve is the key rate of $(3,3)_R$, which uses the same set
 of three different photon intensities for both preparation bases.  Although
 it uses one less photon intensity, it still outperforms the key rate of the
 red curve when $L \gtrsim 45$~km.

\begin{figure}[t]
 \includegraphics[scale=0.9]{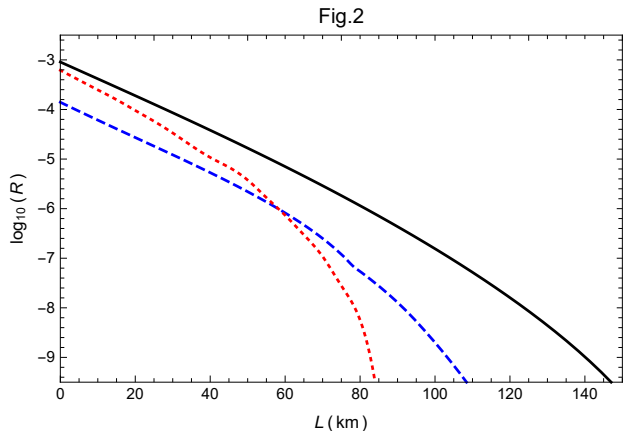}
 \caption{\label{F:Zhou_cmp}  Provably secure optimized key rates $R$ as a
  function of distance $L$ between Alice and Bob for $N_t = 10^9$,
  $\epsilon_\text{sec}/\chi = \epsilon_\text{cor} = 10^{-7}$.
  The red dotted curve is the provably secure key rate reported in
  Ref.~\cite{ZYW16}.
  The black solid curve is the rate computed for $(3,2)_G$; and the blue
  dashed curve is rate for $(3,3)_R$.
  The rates for $(k_\Xbasis\ge 3,k_\Zbasis\ge 3)_G$ are about the same as that
  of $(3,2)_G$ while those of $(k_\Xbasis\ge 4,k_\Zbasis\ge 4)_R$ are a little
  bit lower (higher) than that of $(3,2)_G$ for short (long) transmission
  distances.  They are omitted here to avoid curve jamming.
 }
\end{figure}

 To further illustrate the power of this method, I compare the key rates
 here with the ones obtained in Ref.~\cite{ZYW16} in which they used four
 photon states and the following paramters: $e_d = 1.5\%$, $p_d = 10^{-7}$,
 $\eta_\text{att} = 0.2$~db/km, $\eta_d = 40\%$, $f_\text{EC} = 1.16$,
 $L_\text{A} = L_\text{B} = L/2$ and $\epsilon_\text{sec} / \chi =
 \epsilon_\text{cor} = 10^{-7}$.  The optimized key rates are then found using
 the same method that produces Fig.~\ref{F:Mao_cmp}.
 As shown in Fig.~\ref{F:Zhou_cmp}, the optimized key rate of $(3,2)_G$ (the
 black curve that uses four different photon intensities) is at least 90\%
 higher than those reported in Ref.~\cite{ZYW16}.  Just like the previous
 comparison, the key rate of $(3,3)_R$ which uses only three different photon
 intensities is better than the one reported in Ref.~\cite{ZYW16} when $L
 \gtrsim 60$~km.  Last but not least, the maximum transmission distance
 increases from about 87~km to about 156~km for $(3,2)_G$ and 162~km for
 $(4,3)_G$ (the later not sure in the figure to avoid curve crowding).

\begin{figure}[t]
 \includegraphics[scale=0.9]{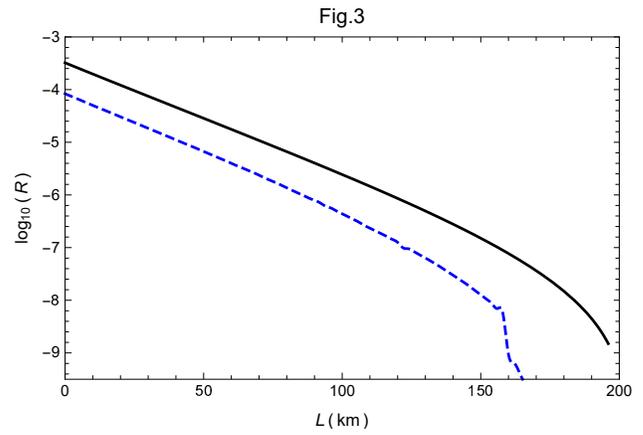}
 \caption{\label{F:kappa_rates} Provably secure optimized key rates $R$ as a
  function of distance $L$ between Alice and Bob for $\ell_\text{raw} =
  10^{10}$ and $\kappa = 10^{-15}$ and $\epsilon_\text{cor} = 10^{-10}$.
  The black solid curve is for $(3,2)_G$ and the blue dashed curve is for
  $(3,3)_R$.
  The rates for $(k_\Xbasis\ge 3,k_\Zbasis\ge 3)_G$ and $(k_\Xbasis\ge 4,
  k_\Zbasis\ge 4)_R$ is higher than that of $(3,2)_G$ by about 18\%.
  These additional curves are omitted here as visually they almost overlap
  with the black solid curve in this semi-log plot.
 }
\end{figure}

 In a sense, instead of $\epsilon_\text{sec} / \chi$, a fairer security
 parameter to use is $\kappa$, namely $\epsilon_\text{sec}$ per number of bits
 of the final key~\cite{LCWXZ14}.
 Fig.~\ref{F:kappa_rates} depicts the $R-L$ curves of various methods when
 $\kappa = 10^{-15}$ and $\epsilon_\text{cor} = 10^{-10}$.  Here instead of
 fixing $N_t$, I keep $\ell_\text{raw} = 10^{10}$ which corresponds to a much
 greater value of $N_t$ in general.
 The blue dashed curve is the rate for $(3,3)_R$ which uses three photon
 intensities.  It already achieves a non-zero key rate at a distance of
 slightly greater than 160~km.  The black curve is the rate for $(3,2)_G$
 which uses four photon intensities.
 It allows Alice and Bob to share a secret key over a distance close to
 200~km.
 This finding makes sense for a larger raw key length $\ell_\text{raw}$
 implies smaller statistical fluctuations in our estimates of various yields
 and error rate, which in turn increase the provably secure key rate and the
 maximum transmission distance.

 Tables~\ref{T:keyrates_epsilon_sec_per_chi}
 and~\ref{T:keyrates_kappa} shows the provably secure optimized key rates
 using various values of $k_\Xbasis$ and $k_\Zbasis$ for the case of fixing
 $\epsilon_\text{sec} / \chi$ and $\kappa$, respectively.  The following
 points can be drawn from these figures and tables.
 First, for the unconstrained photon intensity situation, the optimized key
 rate increases as $k_\Xbasis$ increases.
 For instance, as shown in Table~\ref{T:keyrates_epsilon_sec_per_chi}, the
 key rate of $(4,2)_G$ is at least 39\% higher than that of $(3,2)_G$ by
 fixing $\epsilon_\text{sec}/\chi$.  And from Table~\ref{T:keyrates_kappa},
 the corresponding increase in key rate by fixing $\kappa$ is about 18\% in
 the distance range from 0~km to 150~km.
 (I do not draw these curves in Figs.~\ref{F:Mao_cmp} and~\ref{F:kappa_rates}
 because they almost overlap with the $(3,2)_G$ curve using the same plotting
 scales.)
 Second, for the constrained photon intensity situation, the optimized key
 rate increases as $k_\Xbasis = k_\Zbasis$ increases.  These findings can be
 understood by the fact that the more decoy intensity used, the closer the
 various bounds of yields and error rates are to their actual values.
 Third, the constrained key rate is in general several times lower than the
 corresponding unconstrained one.  So, using the same set of photon
 intensities for the two bases is not a good idea, at least for the \MR
 channel.

\begin{table}[t]
 \centering
 \begin{tabular}{||c|c|c||}
  \hline\hline
  $L$/km & 0 & 50 \\
  \hline
  $(3,2)_G$ & $7.49\times 10^{-5}$ & $1.50\times 10^{-6}$ \\
  \hline
  $(3,3)_R$ & $9.65\times 10^{-6}$ & $1.25\times 10^{-7}$ \\
  \hline
  $(3,3)_G$ & $8.51\times 10^{-5}$ & $1.82\times 10^{-6}$ \\
  \hline
  $(4,2)_G$ & $1.04\times 10^{-4}$ & $2.22\times 10^{-6}$ \\
  \hline
  $(4,3)_G$ & $1.04\times 10^{-4}$ & $2.24\times 10^{-6}$ \\
  \hline
  $(4,4)_R$ & $3.10\times 10^{-5}$ & $3.75\times 10^{-7}$ \\
  \hline
  $(4,4)_G$ & $1.04\times 10^{-4}$ & $2.23\times 10^{-6}$ \\
  \hline\hline
 \end{tabular}
 \caption{Optimized secure key rates for $N_t = 10^{10}$ and
 $\epsilon_\text{sec} / \chi = \epsilon_\text{cor} = 10^{-10}$.
 \label{T:keyrates_epsilon_sec_per_chi}}
\end{table}

 There is an interesting observation that requires in-depth explanation.
 From Table~\ref{T:keyrates_kappa}, for the case of fixing $k_\Xbasis$ and
 $\kappa$, the increase in $R$ due to increase in $k_\Zbasis$ is
 insignificant.
 Moreover, Table~\ref{T:keyrates_epsilon_sec_per_chi} shows that for the case
 fixing $\epsilon_\text{sec} / \chi$ and $k_\Xbasis$, significant increase in
 key rate occurs only when $k_\Xbasis = 3$.
 The reason is that for the \MR channel~\cite{MR12}, it turns out that the key
 rate computed by Inequality~\eqref{E:key_rate_finite-size_alt} is greater
 than that computed by Inequality~\eqref{E:key_rate_finite-size}.
 That is to say, the lower bound of $Y_{\Xbasis,1,1}$ is a better estimate of
 the single photon pair yield that the lower bound of $Y_{\Zbasis,1,1}$.
 Thus, increasing $k_\Zbasis$ only gives a better estimate of
 $Y_{\Zbasis,0,\star}$.  Since I fix the lowest photon intensity to $10^{-6}$,
 which is very close to the vacuum state, the major error in estimating
 $Y_{\Zbasis,0,\star}$ comes from finite-size statistical fluctuation.
 Consequently, by fixing a large enough raw key length $\ell_\text{raw}$, the
 use of more than two photon intensities for the $\Zbasis$ does not improve
 the provably secure key rate in practice.  In other words, the improvement on
 the provably secure key rate by increasing $k_\Zbasis$ alone for the \MR
 channel occurs only when $k_\Zbasis$ is small, say, about 2 to 3 and when
 $\ell_\text{raw}$ is small.
 
 There is a systematic trend that worth reporting.
 For the case of using unconstrained photon intensities, Method~D plus the
 use of $Y_{\Xbasis,1,1}$ to bound the single photon-pair yield gives the
 highest key rate over almost the whole range of distance $L$.  Only when
 close to the maximum transmission distance that the best rate is computed
 using Method~C and $Y_{\Xbasis,1,1}$.
 Whereas for the case of constrained photon intensities, for short transmission
 distance, the best rate is computed using Method~D plus $Y_{\Zbasis,1,1}$.
 For longer transmission distance, the best rate is due to Method~B and
 $Y_{\Xbasis,1,1}$.

\begin{table}[t]
 \centering
 \begin{tabular}{||c|c|c|c|c||}
  \hline\hline
  $L$/km & 0 & 50 & 100 & 150 \\
  \hline
  $(3,2)_G$ & $3.23\times 10^{-4}$ & $2.85\times 10^{-5}$ &
  $2.44\times 10^{-6}$ & $1.51\times 10^{-7}$ \\
  \hline
  $(3,3)_R$ & $8.37\times 10^{-5}$ & $6.67\times 10^{-6}$ &
  $4.33\times 10^{-7}$ & $1.27\times 10^{-8}$ \\
  \hline
  $(3,3)_G$ & $3.23\times 10^{-4}$ & $2.85\times 10^{-5}$ &
  $2.44\times 10^{-6}$ & $1.51\times 10^{-7}$ \\
  \hline
  $(4,2)_G$ & $3.82\times 10^{-4}$ & $3.39\times 10^{-5}$ &
  $2.89\times 10^{-6}$ & $1.78\times 10^{-7}$ \\
  \hline
  $(4,3)_G$ & $3.82\times 10^{-4}$ & $3.39\times 10^{-5}$ &
  $2.89\times 10^{-6}$ & $1.78\times 10^{-7}$ \\
  \hline
  $(4,4)_R$ & $1.70\times 10^{-4}$ & $1.32\times 10^{-5}$ &
  $8.27\times 10^{-7}$ & $2.64\times 10^{-8}$ \\
  \hline
  $(4,4)_G$ & $3.82\times 10^{-4}$ & $3.39\times 10^{-5}$ &
  $2.89\times 10^{-6}$ & $1.78\times 10^{-7}$ \\
  \hline\hline
 \end{tabular}
 \caption{Optimized secure key rates for $\ell_\text{raw} = 10^{10}$,
  $\kappa = 10^{-15}$ and $\epsilon_\text{cor} = 10^{-10}$.
 \label{T:keyrates_kappa}}
\end{table}

\section{Summary}
\label{Sec:Summary}
 In summary, using the BB84 scheme in the \MR channel as an example, I have
 reported a key rate formula for \MDI-\QKD using Possionian photon sources
 through repeated use the inversion of Vandermonde matrix and a McDiarmid-type
 inequality.
 This method gives a provably secure key rate that is at least 2.25~times that
 of the current state-of-the-art result.  It also shows that using five photon
 intensities, more precisely the $(4,2)$-method, gives an additional 18\%
 increase in the key rate for the \MR channel.
 This demonstrates once again the effectiveness of using McDiarmid-type
 inequality in statistical data analysis in physical problems.
 Provided that the photon source is sufficiently close to Possionian,
 
 Note that the Vandermonde matrix inversion technique is rather general.  As
 pointed out in Remark~\ref{Rem:C_property} in the Appendix, by modifying the
 proof of Lemma~\ref{Lem:C_property}, one can show that $C_{3i}
 \ge 0$ if $k$ is even and $C_{3i} < 0$ if $k$ is odd for all $i \ge k$.
 Thus, I can find the lower bound of $Y_{\Bbasis,0,2}$ and $Y_{\Bbasis,2,0}$.
 In other words, I can extend the key rate calculation to the case of
 twin-field~\cite{LYDS18} or phase-matching \MDI-\QKD~\cite{MZZ18}.
 Note further that Inequalities~\eqref{E:various_Y_and_e_bounds} are still
 valid by replacing the ${\mathcal A}_{\Bbasis,j,i}^\even$'s and
 ${\mathcal A}_{\Bbasis,j,i}^\odd$'s by their perturbed expressions through
 standard matrix inversion perturbation as long as the photon sources are
 sufficiently close to Possionian.
 In this regard, the theory developed here also applies to these sources.
 Interested readers are invited to fill in the details.
 Last but not least, it is instructive to extend this work to cover other
 \MDI-\QKD protocols as well as more realistic quantum channels that take dead
 time and imperfect beam splitters into account.

\appendix

\section{Auxiliary Results On Bounds Of Yields And Error Rates}
\label{Sec:results_on_Y_and_e_bounds}
 I begin with the following lemma.
\begin{Lem}
 Let $\mu_1,\mu_2,\cdots,\mu_k$ be $k \ge 2$ distinct real numbers.  Then
 \begin{equation}
  \sum_{i=1}^k \frac{\mu_i^\ell}{\prod_{t\ne i} (\mu_i - \mu_t)} = 0
  \label{E:auxiliary_sum}
 \end{equation}
 for $\ell = 0,1,\cdots,k-2$.
 \label{Lem:sum}
\end{Lem}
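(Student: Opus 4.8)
The plan is to recognize the sum as a leading coefficient in a Lagrange interpolation identity. First I would introduce the Lagrange basis polynomials associated with the distinct nodes $\mu_1,\mu_2,\ldots,\mu_k$, namely
\[
 L_i(x) = \prod_{t\ne i} \frac{x-\mu_t}{\mu_i-\mu_t},
\]
each of degree $k-1$ and satisfying $L_i(\mu_j)=\delta_{ij}$. The crucial observation is that the coefficient of the top-degree term $x^{k-1}$ in $L_i(x)$ is exactly $1/\prod_{t\ne i}(\mu_i-\mu_t)$. Hence the quantity in Eq.~\eqref{E:auxiliary_sum} is precisely the coefficient of $x^{k-1}$ in the combination $\sum_{i=1}^k \mu_i^\ell\, L_i(x)$, which reduces the lemma to a statement about a single polynomial coefficient.

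Next I would invoke uniqueness of polynomial interpolation. Since the nodes are distinct and $x^\ell$ is a polynomial of degree $\ell\le k-1$, the interpolation formula holds as an identity of polynomials,
\[
 x^\ell = \sum_{i=1}^k \mu_i^\ell\, L_i(x).
\]
I would then compare the coefficient of $x^{k-1}$ on both sides. On the left it vanishes whenever $\ell\le k-2$, while on the right it equals $\sum_{i=1}^k \mu_i^\ell/\prod_{t\ne i}(\mu_i-\mu_t)$; equating the two gives the claimed identity. A completely equivalent route is via partial fractions: writing $x^\ell/\prod_t (x-\mu_t)=\sum_i c_i/(x-\mu_i)$ with residues $c_i=\mu_i^\ell/\prod_{t\ne i}(\mu_i-\mu_t)$, the sum $\sum_i c_i$ is the coefficient of $x^{-1}$ in the Laurent expansion at infinity, and this vanishes because the rational function decays like $x^{\ell-k}$ with $\ell-k\le -2$.

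I expect no serious obstacle, as this is a classical Vandermonde-type identity; the proof is essentially immediate once the right framework is chosen. The only point requiring minor care is the degree bookkeeping, namely confirming that the $x^{k-1}$ coefficient of $x^\ell$ indeed vanishes on the full stated range $\ell=0,1,\ldots,k-2$ and equals $1$ exactly at $\ell=k-1$, which explains why the identity stops one degree short. This boundary behavior is precisely what will be exploited later when the normalization of the inverse Vandermonde entries is needed.
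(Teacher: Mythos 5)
Your proof is correct, but it takes a genuinely different route from the paper. You identify the sum as the $x^{k-1}$ coefficient in the Lagrange interpolation identity $x^\ell=\sum_i \mu_i^\ell L_i(x)$ (equivalently, as the residue sum $\sum_i c_i$, i.e.\ the $1/x$ coefficient at infinity of $x^\ell/\prod_t(x-\mu_t)$, which vanishes since that function decays like $x^{\ell-k}$ with $\ell-k\le -2$). The paper instead argues directly on the sum as a rational expression in the $\mu_i$'s: it notes the expression is symmetric, uses the remainder theorem to show each apparent pole $\mu_1-\mu_2$ (and by symmetry every $\mu_i-\mu_t$) cancels, concludes the sum is a homogeneous polynomial of degree $\le \ell-k+1<0$ and hence a constant, and finally evaluates that constant as $0$ by setting $\mu_i=t^i$ and letting $t\to+\infty$. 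Your argument is shorter, works verbatim for arbitrary distinct complex nodes, and cleanly exhibits the boundary case $\ell=k-1$ (where the coefficient is $1$), which is indeed the normalization the paper later needs for its inverse Vandermonde entries. What the paper's more pedestrian approach buys is reusability: the same homogeneity-and-degree bookkeeping is the engine of the subsequent Lemma~2 (the sign analysis of $C_{a+1,i}$ for $i\ge k$), where the exponent exceeds $k-1$, the interpolation identity no longer applies directly, and one must understand the sum as a genuine symmetric polynomial rather than a vanishing coefficient. Your residue viewpoint could in fact be pushed there too (the $1/x$ coefficient becomes a complete homogeneous symmetric polynomial in the $\mu_i$'s), but that extension is not spelled out in your proposal.
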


\begin{proof}
 Note that the L.H.S. of Eq.~\eqref{E:auxiliary_sum} is a symmetric function
 of $\mu_i$'s.
 Moreover, only its first two terms involve the factor $\mu_1-\mu_2$ in the
 denominator.  In fact, the sum of the these two terms equals
 \begin{displaymath}
  \frac{\mu_1^\ell \prod_{t>2} (\mu_2 - \mu_t) - \mu_2^\ell \prod_{t>2} (\mu_1
  - \mu_t)}{(\mu_1 - \mu_2) \prod_{t>2} [(\mu_1 - \mu_t) (\mu_2 - \mu_t)]} .
 \end{displaymath}
 By applying reminder theorem, I know that the numerator of the above
 expression is divisible by $\mu_1 - \mu_2$.  
 Consequently, the L.H.S. of Eq.~\eqref{E:auxiliary_sum} is a homogeneous
 polynomial of degree $\le \ell - k + 1$.  But as $\ell \le k-2$, this means
 the L.H.S. of Eq.~\eqref{E:auxiliary_sum} must be a constant.  By putting
 $\mu_i = t^i$ for all $i$ and by taking the limit $t\to +\infty$, I conclude
 that this constant is $0$.  This completes the proof.
\end{proof}

 Following the notation in Ref.~\cite{Chau18}, I define
\begin{equation}
 C_{a+1,i} = \frac{(-1)^{k-a} a!}{i!} \sum_{t=1}^k \frac{\mu_t^i
 S_{ta}}{\prod_{\ell\ne t} (\mu_t - \mu_\ell)} ,
 \label{E:C_def}
\end{equation}
 where
\begin{equation}
 S_{ta} = \sideset{}{^{'}}{\sum} \mu_{t_1} \mu_{t_2} \cdots \mu_{t_{k-a-1}}
 \label{E:S_in_def}
\end{equation}
 with the primed sum being over all $t_j$'s $\ne i$ obeying $1 \le t_1 < t_2 <
 \dots < t_{k-a-1} \le k$.
 The following lemma is an extension of a result in Ref.~\cite{Chau18}.
\begin{Lem}
 Let $\mu_1 > \mu_2 > \cdots > \mu_k \ge 0$.
 Suppose $0\le i < k$.  Then
 \begin{equation}
  C_{a+1,i} =
  \begin{cases}
   -1 & \text{if } a = i , \\
   0 & \text{otherwise} .
  \end{cases}
  \label{E:C_value1}
 \end{equation}
 Whereas if $i \ge k$, then
 \begin{equation}
  \begin{cases}
   C_{1i} \ge 0 \text{ and } C_{2i} < 0 & \text{if } k \text{ is even,} \\
   C_{1i} \le 0 \text{ and } C_{2i} > 0 & \text{if } k \text{ is odd.}
  \end{cases}
  \label{E:C_value2}
 \end{equation}
 \label{Lem:C_property}
\end{Lem}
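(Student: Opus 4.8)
The plan is to collapse the $t$-sum defining $C_{a+1,i}$ in Eq.~\eqref{E:C_def} into a single symmetric-function expression and then read off its sign. Throughout I write $e_m$ and $h_m$ for the elementary and complete homogeneous symmetric polynomials of degree $m$ in $\mu_1,\dots,\mu_k$, with $e_0=h_0=1$ and $e_m=h_m=0$ for $m<0$. First I would upgrade Lemma~\ref{Lem:sum} to the full power-sum identity
\begin{equation}
 \sum_{t=1}^k \frac{\mu_t^\ell}{\prod_{m\ne t}(\mu_t-\mu_m)} = h_{\ell-k+1}
 \label{E:power_sum_to_h}
\end{equation}
for every integer $\ell\ge 0$: the case $\ell\le k-2$ is precisely Lemma~\ref{Lem:sum}, and the general case drops out of the partial-fraction expansion
\begin{equation}
 \prod_{m=1}^k \frac{1}{1-\mu_m x} = \sum_{t=1}^k \frac{\mu_t^{k-1}}{\prod_{m\ne t}(\mu_t-\mu_m)}\,\frac{1}{1-\mu_t x}
 \label{E:partial_fraction}
\end{equation}
on comparing coefficients of $x^{\ell-k+1}$.

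Next, since $S_{ta}$ is the elementary symmetric polynomial of degree $k-a-1$ in the $\mu_\ell$ omitting $\mu_t$, the single-variable deletion rule gives $S_{ta}=\sum_{j=0}^{k-a-1}(-\mu_t)^j e_{k-a-1-j}$. Substituting this into Eq.~\eqref{E:C_def}, interchanging the two finite sums, and applying Eq.~\eqref{E:power_sum_to_h} to the inner $t$-sum collapses everything to the master formula
\begin{equation}
 C_{a+1,i}=\frac{(-1)^{k-a}\,a!}{i!}\sum_{j=0}^{k-a-1}(-1)^j\,e_{k-a-1-j}\,h_{i-k+1+j} .
 \label{E:C_master}
\end{equation}
Everything now reduces to the behaviour of this alternating sum.

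For $0\le i<k$ only the terms with $i-k+1+j\ge 0$ survive, i.e.\ $k-1-i\le j\le k-1-a$. When $a>i$ this range is empty, so $C_{a+1,i}=0$. When $a=i$ the single term $j=k-1-i$ contributes $e_0 h_0=1$ and yields $C_{i+1,i}=-1$. When $a<i$ the survivors reindex to $(-1)^{k-1-i}\sum_{m=0}^{i-a}(-1)^m e_{i-a-m}h_m$, which vanishes by the standard duality $\sum_{m=0}^{n}(-1)^m e_{n-m}h_m=0$ valid for every $n\ge 1$. This reproduces Eq.~\eqref{E:C_value1}. For $i\ge k$ I set $r=i-k+1\ge 1$ and $s=k-a-1$, and recognise the sum in Eq.~\eqref{E:C_master} as the hook Jacobi--Trudi (Giambelli) expansion
\begin{equation}
 \sum_{j=0}^{s}(-1)^j e_{s-j}\,h_{r+j}=s_{(r,1^s)} ,
 \label{E:hook_schur}
\end{equation}
the Schur polynomial of the partition $(r,1^s)$ in the variables $\mu_1,\dots,\mu_k$. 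Because this partition has $s+1=k-a\le k$ parts and the $\mu_\ell$ are nonnegative, $s_{(r,1^s)}$ is a nonnegative sum of monomials; it is moreover strictly positive when at least $k-a$ of the $\mu_\ell$ are nonzero, which holds for $a=1$ since $\mu_1>\dots>\mu_{k-1}>0$. The signs in Eq.~\eqref{E:C_value2} then follow at once from the prefactor $(-1)^{k-a}$ in Eq.~\eqref{E:C_master}: $C_{1i}$ inherits the sign of $(-1)^k$, while $C_{2i}$ inherits that of $(-1)^{k-1}$ with the Schur factor strictly positive.

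The main obstacle is the positivity input in the very last step. Everything hinges on identifying the signed combination $\sum_j(-1)^j e_{s-j}h_{r+j}$ as a genuine Schur polynomial, so that its nonnegativity on the positive orthant becomes automatic; absent that identification the alternating signs leave the sign of $C_{1i}$ and $C_{2i}$ undetermined. I would secure Eq.~\eqref{E:hook_schur} either by quoting the dual Jacobi--Trudi formula or, to keep the argument self-contained, by a short induction on $k$ that peels off $\mu_k$ using $e_m=e_m'+\mu_k e_{m-1}'$ and $h_m=\sum_{p\ge 0}\mu_k^{\,p}h_{m-p}'$ for the symmetric polynomials $e_m',h_m'$ in the remaining $k-1$ variables. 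The surrounding manipulations in Eqs.~\eqref{E:power_sum_to_h}--\eqref{E:C_master} are routine.
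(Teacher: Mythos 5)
Your proof is correct, but it takes a genuinely different route from the paper's. The paper proceeds case by case: homogeneity/degree bounds inherited from the argument of Lemma~\ref{Lem:sum}, together with the limit substitution $\mu_t=\delta^t$, $\delta\to 0^+$, settle $i\le a$; the range $a<i<k$ is handled through the subcases $a=0,1,2$ followed by an induction that is only sketched; and for $i\ge k$ the signs are extracted by expanding each $1/(\mu_1-\mu_t)$ as a power series in $\mu_1$, which yields an explicit monomial form for $C_{1i}$ and, for $C_{2i}$, a rather terse recursive-expansion argument. You instead collapse the defining sum into the single master formula $C_{a+1,i}=\frac{(-1)^{k-a}a!}{i!}\sum_{j=0}^{k-a-1}(-1)^j e_{k-a-1-j}\,h_{i-k+1+j}$ by combining the extended power-sum identity with the deletion rule for $S_{ta}$; then every case with $i<k$ dies by the duality $\sum_{m=0}^{n}(-1)^m e_{n-m}h_m=0$, and the signs for $i\ge k$ follow from recognizing the alternating sum as the hook Schur polynomial $s_{(i-k+1,1^{k-a-1})}$, a nonnegative sum of monomials that is strictly positive for $a=1$ because only $\mu_k$ may vanish --- which reproduces exactly the weak/strict distinction between $C_{1i}$ and $C_{2i}$ in the statement. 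This buys several things: the induction the paper leaves vague is replaced by one identity; the sign of $C_{2i}$ becomes transparent rather than resting on a recursive expansion; the master formula proves Remark~\ref{Rem:C_property} for free, since for $i\ge k$ the sign of $C_{a+1,i}$ is simply $(-1)^{k-a}$ for every $a$; and it even recovers the paper's explicit expression for $C_{1i}$, because $s_{(i-k+1,1^{k-1})}=\left(\prod_{t=1}^k\mu_t\right)h_{i-k}$. The price is dependence on the hook Jacobi--Trudi identity, which you rightly flag as the crux; it is standard, and besides your two suggested proofs it also follows by telescoping the Pieri rule $e_{s-j}h_{r+j}=s_{(r+j,1^{s-j})}+s_{(r+j+1,1^{s-j-1})}$. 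The paper's argument, by contrast, stays entirely elementary, needing nothing beyond Lemma~\ref{Lem:sum}, at the cost of rigor in the inductive step and in the $C_{2i}$ sign analysis.
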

\begin{proof}
 Using the same argument as in the proof of Lemma~\ref{Lem:sum}, I conclude
 that $C_{a+1,i}$ is a homogeneous polynomial of degree $\le i-a$.

 Consider the case of $i\le a$ so that $C_{a+1,i}$ is a constant.  By putting
 $\mu_t = \delta^t$ for all $t$ and then taking the limit $\delta\to 0^+$, it
 is straightforward to check that $C_{a+1,i} = 0$ if $i < a$ and $C_{a+1,i} =
 -1$ if $i = a$.

 It remains to consider the case of $a < i < k$.  I first consider the subcase
 of $a = 0$.  Here $C_{1i}$ contains a common factor of $\prod_{t=1}^k
 \mu_t$, which is of degree $k > i$.  Therefore, I could write $\prod_{t=1}^k
 \mu_t = C_{1i} F$ where $F$ is a homogeneous polynomial of degree $\ge i$.
 As a consequence, either $C_{1i}$ or $F$ contains $\mu_1$ and hence all
 $\mu_t$'s.  Thus, $C_{1i}$ must be a constant for $i < k$.  By setting $\mu_k
 = 0$, I know that $C_{1i} = 0$.

 Next, I consider the subcase of $a = 1$.  Since $i > 1$, from the findings of
 the first subcase, I arrive at
 \begin{align}
  C_{2i} &= \frac{(-1)^{k-1}}{i!} \sum_{t=1}^k \frac{T_1 \mu_t^{i-1} - \left(
   \prod_{\ell=1}^k \mu_\ell \right) \mu_t^{i-2}}{\prod_{\ell\ne t} (\mu_t -
   \mu_\ell)} \nonumber \\
  &= \frac{(-1)^{k-1} T_1}{i!} \sum_{t=1}^k \frac{\mu_t^{i-1}}{\prod_{\ell\ne
   t} (\mu_t - \mu_\ell)} - C_{1,i-1} \nonumber \\
  &= \frac{(-1)^{k-1} T_1}{i!} \sum_{t=1}^k \frac{\mu_t^{i-1}}{\prod_{\ell\ne
   t} (\mu_t - \mu_\ell)} ,
  \label{E:C_a=1}
 \end{align}
 where $T_1$ is the symmetric polynomial
\begin{equation}
 T_1 = \sum_{t=1}^k \mu_1 \mu_2 \cdots \mu_{t-1} \mu_{t+1} \cdots \mu_k .
 \label{E:T1_def}
\end{equation}
 By Lemma~\ref{Lem:sum}, I find that $C_{2i} = 0$ as $i < k$.

 The third subcase I consider is $a = 2$.  As $i > 2$,
 \begin{equation}
  C_{3i} = \frac{2 (-1)^k}{i!} \sum_{t=1}^k \frac{T_2 \mu_t^{i-1} - T_1
   \mu_t^{i-2} + \left( \prod_{\ell=1}^k \mu_\ell \right)
   \mu_t^{i-3}}{\prod_{\ell\ne t} (\mu_t - \mu_\ell)} ,
  \label{E:C_a=2}
 \end{equation}
 where
\begin{equation}
 T_2 = \sideset{}{^{'}}{\sum} \mu_{t_1} \mu_{t_2} \cdots \mu_{t_{k-2}}
 \label{E:T2_def}
\end{equation}
 with the primed sum over all $t_j$'s with $1\le t_1 < t_2 < \dots < t_{k-2}
 \le k$.  By Lemma~\ref{Lem:sum}, I get $C_{3i} = 0$ as $i < k$.

 By induction, the proof of the subcase $a=2$ can be extended to show the
 validity for all $a \le k$ and $a < i < k$.  This shows the validity of
 Eq.~\eqref{E:C_value1}

 The proof of Eq.~\eqref{E:C_value2} can be found in Ref.~\cite{Chau18}.  I
 reproduce here for easy reference.
 By expanding the $1/(\mu_1 - \mu_t)$'s in $C_{ai}$ as a power series of
 $\mu_1$ with all the other $\mu_t$'s fixed, I obtain
 \begin{align}
  C_{1i} &= \frac{(-1)^k}{i!} \left( \prod_{t=1}^k \mu_t \right) \left[
   \mu_1^{i-k} \prod_{r=2}^k \left( 1 + \frac{\mu_r}{\mu_1} +
   \frac{\mu_r^2}{\mu_1^2} + \cdots \right) \right. \nonumber \\
  &\qquad \left. \vphantom{\prod_{r=2}^k \left( 1 + \frac{\mu_r}{\mu_1} +
   \frac{\mu_r^2}{\mu_1^2} \right)}
   + f(\mu_2,\mu_3,\cdots,\mu_k) \right] + \BigOh(\frac{1}{\mu_1})
  \label{E:C_1i_intermediate}
 \end{align}
 for some function $f$ independent of $\mu_1$.
 As $C_{1i}$ is a homogeneous polynomial of degree $\le i$, by equating terms
 in powers of $\mu_1$, I get
 \begin{equation}
  C_{1i} = \frac{(-1)^k}{i!} \left( \prod_{t=1}^k \mu_t \right)
  \sum_{\substack{t_1+t_2+\dots+t_k = i-k, \\ t_1,t_2,\cdots,t_k \ge 0}}
   \mu_1^{t_1} \mu_2^{t_2} \cdots \mu_k^{t_k}
  \label{E:C_1i_explicit}
 \end{equation}
 for all $i\ge k$.  As all $\mu_t$'s are non-negative, $C_{1i} \ge 0$ if
 $k$ is even and $C_{1i} \le 0$ if $k$ is odd.

 By the same argument, I expand all the $1/(\mu_1 - \mu_t)$ terms in $C_{2i}$
 in powers of $\mu_1$ to get
 \begin{align}
  C_{2i} &= \frac{(-1)^{k-1} T_1}{i!}
   \!\!\sum_{\substack{t_1+t_2+\cdots+t_k = i-k
   \\ t_1 > 0, t_2,\cdots,t_k \ge 0}}
   \mu_1^{t_1} \mu_2^{t_2} \cdots \mu_k^{t_k} \nonumber \\
  & \quad + f'(\mu_2,\cdots,\mu_k)
  \label{E:C_2j_intermediate}
 \end{align}
 for some function $f'$ independent of $\mu_1$.  By recursively expanding
 Eq.~\eqref{E:C_def} in powers of $\mu_2$ but with $\mu_1$ set to $0$, and
 then in powers of $\mu_3$ with $\mu_1,\mu_2$ set to $0$ and so on, I conclude
 that whenever $i \ge k$, then $C_{2i} < 0$ if $k$ is even and $C_{2i} > 0$ if
 $k$ is odd.
 This completes the proof.
\end{proof}

\begin{Rem}
 By the same technique of expanding each factor of $1/(\mu_1 - \mu_t)$ 
 in $C_{a+1,i}$ in powers of $\mu_1$, it is straightforward to show that if
 $i \ge k$ and $j\ge 1$, then $C_{2j+1,i} \ge 0$ and $C_{2j,i} \le 0$ provided
 that $k$ is even.  And $C_{2j+1,i} \le 0$ and $C_{2j,i} \ge 0$ provided that
 $k$ is odd.
 \label{Rem:C_property}
\end{Rem}

 The following theorem is an extension of a similar result reported in
 Ref.~\cite{Chau18} by means of an explicit expression of the inverse of a
 Vandermonde matrix.

\begin{Thrm}
 Let $\mu_1 > \mu_2 > \cdots > \mu_k \ge 0$ and $\tilde{\mu}_1 > \tilde{\mu}_2 >
 \cdots > \tilde{\mu}_{\tilde{k}} \ge 0$.  Suppose
 \begin{equation}
  \sum_{a,b=0}^{+\infty} \frac{\mu_i^a}{a!} \frac{\tilde{\mu}_j^b}{b!} A_{ab}
  \equiv \sum_{a,b=0}^{+\infty} M_{a+1,i} \tilde{M}_{b+1,j} A_{ab} =
  B_{ij}
  \label{E:A_B_relation}
 \end{equation}
 for all $i = 1,2,\cdots,k$ and $j = 1,2,\cdots,\tilde{k}$.
 Then,
 \begin{align}
  A_{ab} &= \sum_{i=1}^k \sum_{j=1}^{\tilde{k}} \left( M^{-1} \right)_{a+1,i}
   \left( \tilde{M}^{-1} \right)_{b+1,j} B_{ij} \nonumber \\
  & \quad + \sum_{I=k}^{+\infty} C_{a+1,I} A_{Ib} +
   \sum_{J=\tilde{k}}^{+\infty} \tilde{C}_{b+1,J} A_{aJ} \nonumber \\
  & \quad - \sum_{I=k}^{+\infty} \sum_{J=\tilde{k}}^{+\infty} C_{a+1,I}
   \tilde{C}_{b+1,J} A_{IJ}
  \label{E:explicit_A_relation}
 \end{align}
 for all $a = 0,1,\cdots,k-1$ and $b = 0,1,\cdots,\tilde{k}-1$.
 Here
 \begin{equation}
  \left( M^{-1} \right)_{a+1,i} = \frac{(-1)^{k-a-1} a! S_{ia}}{\prod_{t\ne i}
  (\mu_i - \mu_t)}
  \label{E:M_inverse}
 \end{equation}
 and similarly for $\left( \tilde{M}^{-1}
 \right)_{b+1,j}$.
 \label{Thrm:double_inversion}
\end{Thrm}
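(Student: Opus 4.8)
The plan is to reduce the two-variable inversion to three applications of a single-variable Vandermonde inversion, exploiting the separable (tensor-product) structure of the kernel $M_{a+1,i}\tilde{M}_{b+1,j}$ in Eq.~\eqref{E:A_B_relation}. The whole argument hinges on one algebraic identity, namely that contracting the claimed inverse \eqref{E:M_inverse} against a column $M_{I+1,i}=\mu_i^I/I!$ reproduces $-C_{a+1,I}$ for every $I\ge 0$:
\begin{equation}
 \sum_{i=1}^k \left( M^{-1} \right)_{a+1,i} M_{I+1,i} = -C_{a+1,I} ,
 \label{E:core_identity}
\end{equation}
and likewise for the tilde quantities. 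I would prove Eq.~\eqref{E:core_identity} by substituting \eqref{E:M_inverse} together with $M_{I+1,i}=\mu_i^I/I!$ into the left-hand side and matching the result term by term against the definition \eqref{E:C_def} of $C_{a+1,I}$; the only content is the sign flip $(-1)^{k-a-1}=-(-1)^{k-a}$ and the renaming of the summation index. Combined with Lemma~\ref{Lem:C_property}, Eq.~\eqref{E:core_identity} does double duty: for $0\le I<k$ it yields $\sum_i (M^{-1})_{a+1,i}M_{I+1,i}=\delta_{aI}$ by Eq.~\eqref{E:C_value1}, which simultaneously confirms that \eqref{E:M_inverse} inverts the $k\times k$ scaled Vandermonde block; for $I\ge k$ it is exactly the tail coefficient $-C_{a+1,I}$.

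With \eqref{E:core_identity} in hand I would first record the one-variable inversion. Given $B_i=\sum_{a=0}^{+\infty} M_{a+1,i}A_a$ for $i=1,\dots,k$, I contract with $(M^{-1})_{a+1,i}$, interchange the (absolutely convergent, since $A_a\in[0,1]$ and $\sum_a\mu_i^a/a!=\exp(\mu_i)$) sums over $i$ and $a$, and apply \eqref{E:core_identity} to obtain
\begin{equation}
 A_a = \sum_{i=1}^k \left( M^{-1} \right)_{a+1,i} B_i + \sum_{I=k}^{+\infty} C_{a+1,I} A_I
 \label{E:single_inv}
\end{equation}
for $a=0,\dots,k-1$, where the $I<k$ contributions collapse to $A_a$ by the orthogonality above. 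The same identity, read in reverse, gives the companion relation $\sum_{j=1}^{\tilde{k}}(\tilde{M}^{-1})_{b+1,j}\bigl(\sum_{b'} \tilde{M}_{b'+1,j}A_{Ib'}\bigr)=A_{Ib}-\sum_{J=\tilde{k}}^{+\infty}\tilde{C}_{b+1,J}A_{IJ}$ for any fixed first index $I$, which is what later lets me re-expand the tail contributions.

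I would then carry out the double inversion in stages. Writing $G_{aj}\equiv\sum_{b=0}^{+\infty}\tilde{M}_{b+1,j}A_{ab}$, Eq.~\eqref{E:A_B_relation} reads $B_{ij}=\sum_a M_{a+1,i}G_{aj}$, so \eqref{E:single_inv} applied in the $a\leftrightarrow i$ variable for each fixed $j$ expresses $G_{aj}$ through $B_{ij}$ and the tails $G_{Ij}$ with $I\ge k$. Applying the $\tilde{M}$-inversion in the $b\leftrightarrow j$ variable to $A_{ab}$ for fixed $a$ gives $A_{ab}=\sum_j(\tilde{M}^{-1})_{b+1,j}G_{aj}+\sum_{J\ge\tilde{k}}\tilde{C}_{b+1,J}A_{aJ}$. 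Substituting the expression for $G_{aj}$ and then re-expanding each tail $\sum_j(\tilde{M}^{-1})_{b+1,j}G_{Ij}$ via the companion relation collects the terms into exactly the four groups of \eqref{E:explicit_A_relation}: the double sum against $B_{ij}$, the single tails $\sum_{I\ge k}C_{a+1,I}A_{Ib}$ and $\sum_{J\ge\tilde{k}}\tilde{C}_{b+1,J}A_{aJ}$, and the cross term $-\sum_{I\ge k,\,J\ge\tilde{k}}C_{a+1,I}\tilde{C}_{b+1,J}A_{IJ}$, the last being precisely the $-\tilde{C}_{b+1,J}A_{IJ}$ correction picked up when the $G_{Ij}$ tails are themselves inverted.

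The routine parts are the term-by-term verification of \eqref{E:core_identity} and the algebra of collecting terms. The main thing to be careful about is the legitimacy of interchanging the several nested infinite summations, and of inverting the ``tail'' slices $A_{Ib'}$ with $I\ge k$ on the same footing as the head block; both are controlled by the boundedness $0\le A_{ab}\le 1$ and the absolute convergence of $\sum_a\mu_i^a/a!$ and $\sum_b\tilde{\mu}_j^b/b!$, so Fubini applies throughout and the index $I$ (respectively $J$) enters the companion inversion only as a passive spectator.
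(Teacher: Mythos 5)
Your proposal is correct and follows essentially the same route as the paper's own proof: two successive one-variable Vandermonde-type inversions whose tail contributions are identified with the coefficients $-C_{a+1,I}$ (your core identity, which the paper uses implicitly in passing between its intermediate equations), followed by Lemma~\ref{Lem:C_property} to collapse the head-block contractions into Kronecker deltas and produce the four groups of terms in Eq.~\eqref{E:explicit_A_relation}. The differences are only presentational — you state the contraction identity and the Fubini justification explicitly, where the paper cites Ref.~\cite{Chau18} for the inverse matrix elements and leaves convergence implicit.
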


\begin{proof}
 Note that for any fixed $a = 0,1,\cdots,k-1$ and $b = 0,1,\cdots,
 \tilde{k}-1$,
 \begin{equation}
  \sum_{b=0}^{+\infty} \frac{\tilde{\mu}_{j}^b}{b!} A_{ab} = \sum_{i=1}^k
  \left( M^{-1} \right)_{a+1,i} \left( B_{ij} - \sum_{b=0}^{+\infty}
  \sum_{I=k}^{+\infty} \frac{\mu_i^I}{I!} \frac{\tilde{\mu}_j^b}{b!} A_{Ib}
  \right) .
  \label{E:inversion_intermediate1}
 \end{equation}
 Here $M^{-1}$ is the inverse of the $k\times k$ matrix
 $(M_{a+1,i})_{a+1,i=1}^k$.  From Ref.~\cite{Chau18}, the matrix elements of
 $M^{-1}$ are related to inverse of certain Vandermonde matrix and are given
 by the expression immediately after Eq.~\eqref{E:explicit_A_relation}.  From
 Lemma~\ref{Lem:C_property}, Eq.~\eqref{E:inversion_intermediate1} can be
 rewritten as
 \begin{equation}
  \sum_{b=0}^{+\infty} \frac{\tilde{\mu}_j^b}{b!} A_{ab} = \sum_{a=1}^k \left(
  M^{-1} \right)_{a+1,i} B_{ij} + \sum_{b=0}^{+\infty} \sum_{I=k}^{+\infty}
  \frac{\tilde{\mu}_j^b}{b!} C_{a+1,I} A_{Ib} .
  \label{E:inversion_intermediate2}
 \end{equation}
 By repeating the above procedure again, I find that for any fixed $a=0,1,
 \cdots,k-1$ and $b = 0,1,\cdots,\tilde{k}-1$,
 \begin{align}
  A_{ab} &= \sum_{i=1}^k \sum_{j=1}^{\tilde{k}} \left( M^{-1} \right)_{a+1,i}
   \left( \tilde{M}^{-1} \right)_{b+1,j} B_{ij} \nonumber \\
  &\quad - \sum_{I=k}^{+\infty} \sum_{\tilde{t}=0}^{+\infty} C_{a+1,I}
   \tilde{C}_{b+1,\tilde{t}} A_{I\tilde{t}} + \sum_{J=\tilde{k}}^{+\infty}
   \tilde{C}_{b+1,J} A_{aJ} .
  \label{E:inversion_intermediate3}
 \end{align}
 Here the $\tilde{k}\times\tilde{k}$ matrix $\tilde{C}$ is defined in the
 exactly the same as the $k\times k$ matrix $C$ except that the $k$ and
 $\mu_t$'s variables are replaced by $\tilde{k}$ and the corresponding
 $\tilde{\mu}_t$'s.
 Substituting Eq.~\eqref{E:C_value1} into the above equation gives
 Eq.~\eqref{E:explicit_A_relation}.
\end{proof}

 Applying Lemma~\ref{Lem:C_property} and Theorem~\ref{Thrm:double_inversion},
 in particular, the Inequality~\eqref{E:C_value2}, I arrive at the following
 two Corollaries.

\begin{Cor}
 Suppose the conditions stated in Theorem~\ref{Thrm:double_inversion} are
 satisfied.  Suppose further that $A_{ab} = 0$ for all $b > 0$ and $a\ge 0$;
 and $A_{a0} \in [0,1]$ for all $a$.  Then
 \begin{equation}
  A_{00} \ge \sum_{i=1}^k \left( M^{-1} \right)_{1i} B_{i0} .
  \label{E:A_0*_lower_bound}
 \end{equation}
 \label{Cor:Y0*}
\end{Cor}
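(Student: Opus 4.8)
The plan is to read the corollary off the master identity \eqref{E:explicit_A_relation} of Theorem~\ref{Thrm:double_inversion} evaluated at the corner entry $a=b=0$, and then to use the hypotheses to strip away all but one of the correction terms. Setting $a=b=0$ in \eqref{E:explicit_A_relation} gives
\[
 A_{00}=\sum_{i=1}^k\sum_{j=1}^{\tilde k}(M^{-1})_{1,i}(\tilde M^{-1})_{1,j}B_{ij}
 +\sum_{I=k}^{+\infty}C_{1,I}A_{I0}
 +\sum_{J=\tilde k}^{+\infty}\tilde C_{1,J}A_{0J}
 -\sum_{I=k}^{+\infty}\sum_{J=\tilde k}^{+\infty}C_{1,I}\tilde C_{1,J}A_{IJ}.
\]
Since every index $J$ appearing in the last two groups satisfies $J\ge\tilde k\ge 1$, each factor $A_{0J}$ and $A_{IJ}$ carries a positive second subscript and therefore vanishes by the assumption $A_{ab}=0$ for $b>0$. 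Hence only the single correction $\sum_{I\ge k}C_{1,I}A_{I0}$ survives.

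Next I would collapse the leading double sum into the single sum of \eqref{E:A_0*_lower_bound}. The same hypothesis, applied to the defining relation \eqref{E:A_B_relation}, shows that $B_{ij}=\sum_{a\ge0}\mu_i^a A_{a0}/a!$ is independent of $j$; I denote this common value by $B_{i0}$. It then remains to verify the normalization $\sum_{j=1}^{\tilde k}(\tilde M^{-1})_{1,j}=1$. This is the $(0,0)$ entry of the inverse relation $\sum_{j}(\tilde M^{-1})_{b+1,j}\,\tilde M_{b'+1,j}=\delta_{bb'}$, which expresses that the explicit matrix \eqref{E:M_inverse} (formed from the tilded data) inverts the square matrix $\tilde M_{b+1,j}=\tilde\mu_j^b/b!$; the relation itself follows from the elementary-symmetric-polynomial computation behind Lemma~\ref{Lem:sum} together with the closed form \eqref{E:M_inverse}, and since the $b'=0$ row of $\tilde M$ is identically $1$, its $b=b'=0$ instance reduces to $\sum_j(\tilde M^{-1})_{1,j}=1$. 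Substituting $B_{ij}=B_{i0}$ and this normalization turns the surviving expression into the exact identity
\[
 A_{00}=\sum_{i=1}^k(M^{-1})_{1,i}B_{i0}+\sum_{I=k}^{+\infty}C_{1,I}A_{I0}.
\]

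Finally I would settle the sign of the residual. By Lemma~\ref{Lem:C_property}, specifically \eqref{E:C_value2}, one has $C_{1,I}\ge 0$ for every $I\ge k$ exactly when $k$ is even; combined with $A_{I0}\in[0,1]\subseteq[0,+\infty)$ this makes every term of $\sum_{I\ge k}C_{1,I}A_{I0}$ nonnegative, so the residual is $\ge 0$ and \eqref{E:A_0*_lower_bound} follows. Convergence of this infinite residual need not be argued separately, as it is inherited from the equality already asserted in Theorem~\ref{Thrm:double_inversion}.

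The two delicate points, which I regard as the crux, are the normalization $\sum_j(\tilde M^{-1})_{1,j}=1$ and, above all, the parity bookkeeping. The computation delivers a genuine \emph{lower} bound only for even $k$: for odd $k$ the signs in \eqref{E:C_value2} reverse and the identical argument instead yields an upper bound. This is precisely why the application in \eqref{E:Y0*_bound} uses the even-superscript coefficients $\mathcal A^\even$, discarding the largest intensity whenever $k_\Bbasis$ is odd so that the effective number of intensities fed into the corollary is always even.
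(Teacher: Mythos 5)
Your proof is correct and follows essentially the same route the paper intends: the paper derives this corollary precisely by specializing Theorem~\ref{Thrm:double_inversion} and invoking the sign information of Lemma~\ref{Lem:C_property} (Inequality~\eqref{E:C_value2}), which is exactly what you did, and you additionally supply the details the paper leaves implicit (killing the $b>0$ terms, the normalization $\sum_{j}(\tilde{M}^{-1})_{1,j}=1$, and the nonnegativity of the residual $\sum_{I\ge k}C_{1,I}A_{I0}$). Your parity caveat is also accurate rather than a defect: the corollary as stated tacitly assumes an even number of intensities, and the paper's application enforces this through the coefficients ${\mathcal A}^\even_{\Bbasis,0,i}$ in Eq.~\eqref{E:Y0*_bound}, which drop the largest intensity whenever $k_\Bbasis$ is odd.
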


\begin{Cor}
 Suppose the conditions stated in Theorem~\ref{Thrm:double_inversion} are
 satisfied.  Suppose further that $A_{ab} \in [0,1]$ for all $a,b$.  Then
 \begin{subequations}
 \begin{equation}
  A_{00} \ge \sum_{i=1}^k \sum_{j=1}^{\tilde{k}} \left( M^{-1} \right)_{1i}
  \left( \tilde{M}^{-1} \right)_{1j} B_{ij} - \sum_{I=k}^{+\infty}
  \sum_{J=\tilde{k}}^{+\infty} C_{1I} \tilde{C}_{1J}
  \label{E:A00_lower_bound}
 \end{equation}
 and
 \begin{equation}
  A_{11} \le \sum_{i=1}^k \sum_{j=1}^{\tilde{k}} \left( M^{-1} \right)_{2i}
  \left( \tilde{M}^{-1} \right)_{2j} B_{ij}
  \label{E:A11_upper_bound}
 \end{equation}
 provided both $k$ and $\tilde{k}$ are even.  Furthermore,
 \begin{equation}
  A_{11} \ge \sum_{i=1}^k \sum_{j=1}^{\tilde{k}} \left( M^{-1} \right)_{2i}
  \left( \tilde{M}^{-1} \right)_{2j} B_{ij} - \sum_{I=k}^{+\infty}
  \sum_{J=\tilde{k}}^{+\infty} C_{2I} \tilde{C}_{2J}
  \label{E:A11_lower_bound}
 \end{equation}
 \end{subequations}
 if both $k$ and $\tilde{k}$ are odd.
 \label{Cor:bounds_on_Yxx}
\end{Cor}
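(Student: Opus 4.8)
The plan is to read the exact values of $A_{00}$ and $A_{11}$ off Theorem~\ref{Thrm:double_inversion} and then control the resulting infinite tail terms using nothing more than the sign information in Lemma~\ref{Lem:C_property} together with the hypothesis $A_{ab}\in[0,1]$. First I would specialise Eq.~\eqref{E:explicit_A_relation} to $a=b=0$, obtaining
\begin{align}
 A_{00} &= \sum_{i=1}^k \sum_{j=1}^{\tilde{k}} \left( M^{-1} \right)_{1i}
  \left( \tilde{M}^{-1} \right)_{1j} B_{ij} + \sum_{I=k}^{+\infty} C_{1I}
  A_{I0} \nonumber \\
 &\quad + \sum_{J=\tilde{k}}^{+\infty} \tilde{C}_{1J} A_{0J} -
  \sum_{I=k}^{+\infty} \sum_{J=\tilde{k}}^{+\infty} C_{1I} \tilde{C}_{1J}
  A_{IJ} ,
 \label{E:proposal_A00}
\end{align}
and similarly to $a=b=1$, which produces the identical identity with every subscript $1$ on $M^{-1}$, $\tilde{M}^{-1}$, $C$ and $\tilde{C}$ replaced by $2$. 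In each case the leading double sum is exactly the claimed ``main term'', so the whole problem reduces to bounding the two single tail sums and the single double tail sum.

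Next I would invoke Lemma~\ref{Lem:C_property}: for $I\ge k$ one has $C_{1I}\ge 0$ and $C_{2I}<0$ when $k$ is even, and $C_{1I}\le 0$ and $C_{2I}>0$ when $k$ is odd, with the same statements for $\tilde{C}$ in terms of $\tilde{k}$. For Inequality~\eqref{E:A00_lower_bound} I take $k,\tilde{k}$ even, so $C_{1I},\tilde{C}_{1J}\ge 0$; since $A_{I0},A_{0J}\ge 0$ the two single tails in Eq.~\eqref{E:proposal_A00} are nonnegative, while $C_{1I}\tilde{C}_{1J}\ge 0$ and $A_{IJ}\le 1$ give $\sum_{I,J} C_{1I}\tilde{C}_{1J} A_{IJ}\le \sum_{I,J} C_{1I}\tilde{C}_{1J}$, so the last term is at least $-\sum_{I,J} C_{1I}\tilde{C}_{1J}$; adding these yields the bound. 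For the upper bound~\eqref{E:A11_upper_bound} I again take $k,\tilde{k}$ even, so now $C_{2I},\tilde{C}_{2J}<0$: both single tails are nonpositive, and because $C_{2I}\tilde{C}_{2J}>0$ with $A_{IJ}\ge 0$ the negated double tail is also nonpositive, so discarding all three nonpositive corrections leaves exactly the main term as an upper bound. Finally, for~\eqref{E:A11_lower_bound} I take $k,\tilde{k}$ odd, so $C_{2I},\tilde{C}_{2J}>0$; the two single tails are nonnegative and, exactly as for $A_{00}$, the inequality $A_{IJ}\le 1$ turns the double tail into the correction $-\sum_{I,J} C_{2I}\tilde{C}_{2J}$, giving the stated lower bound.

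The only real work lies in pinning down the sign of the cross double-sum term and in justifying the term-by-term manipulations of the infinite tails. The cross term carries the \emph{product} $C_{a+1,I}\tilde{C}_{b+1,J}$, and the parity hypotheses are chosen precisely so that in the two lower-bound cases the two factors share a common sign (making the product nonnegative, which is what lets me replace $A_{IJ}$ by its upper bound $1$), whereas in the upper-bound case the single tails and the cross term are simultaneously nonpositive. I would then check that all these tails converge absolutely, which follows from the closed form~\eqref{E:C_1i_explicit} for $C_{1I}$ and its analogue for $C_{2I}$, where the $1/I!$ factor dominates the homogeneous symmetric sum; this legitimises both the factorisation $\sum_{I,J} C_{1I}\tilde{C}_{1J} = \left( \sum_I C_{1I} \right)\left( \sum_J \tilde{C}_{1J} \right)$ implicit in the correction term and the monotone bounding $A_{IJ}\le 1$ applied inside the sum. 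No optimisation or delicate estimate is needed beyond this sign bookkeeping.
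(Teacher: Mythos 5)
Your proposal is correct and is essentially the paper's own argument: the paper likewise obtains the corollary by specialising Eq.~\eqref{E:explicit_A_relation} of Theorem~\ref{Thrm:double_inversion} to $(a,b)=(0,0)$ and $(1,1)$ and then using the sign information of Inequality~\eqref{E:C_value2} in Lemma~\ref{Lem:C_property} together with $A_{ab}\in[0,1]$ to drop or bound the tail sums. Your parity bookkeeping for the three cases matches the intended proof, and the convergence remark is a harmless (already implicit) addition.
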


\begin{Rem}
 Clearly, each of the bounds in the above Corollary are tight.  Although the
 conditions for attaining the bound in Inequality~\eqref{E:A11_upper_bound}
 are not compatible with those for attaining the bounds in
 Inequalities~\eqref{E:A00_lower_bound} and~\eqref{E:A11_lower_bound}, the
 way I use these inequalities in Secs.~\ref{Sec:Rate}
 and~\ref{Sec:Finite_Size} ensures that it is possible to attaining all
 these bounds in the key rate formula.
 \label{Rem:tightness}
\end{Rem}

\begin{acknowledgments}
 This work is supported by the RGC grant 17302019 of the Hong Kong SAR
 Government.
\end{acknowledgments}

\bibliographystyle{apsrev4-1}

\bibliography{qc76.2}

\begin{thebibliography}{35}%
\makeatletter
\providecommand \@ifxundefined [1]{%
 \@ifx{#1\undefined}
}%
\providecommand \@ifnum [1]{%
 \ifnum #1\expandafter \@firstoftwo
 \else \expandafter \@secondoftwo
 \fi
}%
\providecommand \@ifx [1]{%
 \ifx #1\expandafter \@firstoftwo
 \else \expandafter \@secondoftwo
 \fi
}%
\providecommand \natexlab [1]{#1}%
\providecommand \enquote  [1]{``#1''}%
\providecommand \bibnamefont  [1]{#1}%
\providecommand \bibfnamefont [1]{#1}%
\providecommand \citenamefont [1]{#1}%
\providecommand \href@noop [0]{\@secondoftwo}%
\providecommand \href [0]{\begingroup \@sanitize@url \@href}%
\providecommand \@href[1]{\@@startlink{#1}\@@href}%
\providecommand \@@href[1]{\endgroup#1\@@endlink}%
\providecommand \@sanitize@url [0]{\catcode `\\12\catcode `\$12\catcode
  `\&12\catcode `\#12\catcode `\^12\catcode `\_12\catcode `\%12\relax}%
\providecommand \@@startlink[1]{}%
\providecommand \@@endlink[0]{}%
\providecommand \url  [0]{\begingroup\@sanitize@url \@url }%
\providecommand \@url [1]{\endgroup\@href {#1}{\urlprefix }}%
\providecommand \urlprefix  [0]{URL }%
\providecommand \Eprint [0]{\href }%
\providecommand \doibase [0]{http://dx.doi.org/}%
\providecommand \selectlanguage [0]{\@gobble}%
\providecommand \bibinfo  [0]{\@secondoftwo}%
\providecommand \bibfield  [0]{\@secondoftwo}%
\providecommand \translation [1]{[#1]}%
\providecommand \BibitemOpen [0]{}%
\providecommand \bibitemStop [0]{}%
\providecommand \bibitemNoStop [0]{.\EOS\space}%
\providecommand \EOS [0]{\spacefactor3000\relax}%
\providecommand \BibitemShut  [1]{\csname bibitem#1\endcsname}%
\let\auto@bib@innerbib\@empty
\bibitem [{\citenamefont {Wang}(2005)}]{Wang05}%
  \BibitemOpen
  \bibfield  {author} {\bibinfo {author} {\bibfnamefont {X.-B.}\ \bibnamefont
  {Wang}},\ }\href@noop {} {\bibfield  {journal} {\bibinfo  {journal} {Phys.
  Rev. Lett.}\ }\textbf {\bibinfo {volume} {94}},\ \bibinfo {pages} {230503}
  (\bibinfo {year} {2005})}\BibitemShut {NoStop}%
\bibitem [{\citenamefont {Lo}\ \emph {et~al.}(2005)\citenamefont {Lo},
  \citenamefont {Ma},\ and\ \citenamefont {Chen}}]{LMC05}%
  \BibitemOpen
  \bibfield  {author} {\bibinfo {author} {\bibfnamefont {H.-K.}\ \bibnamefont
  {Lo}}, \bibinfo {author} {\bibfnamefont {X.}~\bibnamefont {Ma}}, \ and\
  \bibinfo {author} {\bibfnamefont {K.}~\bibnamefont {Chen}},\ }\href@noop {}
  {\bibfield  {journal} {\bibinfo  {journal} {Phys. Rev. Lett.}\ }\textbf
  {\bibinfo {volume} {94}},\ \bibinfo {pages} {230504} (\bibinfo {year}
  {2005})}\BibitemShut {NoStop}%
\bibitem [{\citenamefont {Lo}\ \emph {et~al.}(2012)\citenamefont {Lo},
  \citenamefont {Curty},\ and\ \citenamefont {Qi}}]{LCQ12}%
  \BibitemOpen
  \bibfield  {author} {\bibinfo {author} {\bibfnamefont {H.-K.}\ \bibnamefont
  {Lo}}, \bibinfo {author} {\bibfnamefont {M.}~\bibnamefont {Curty}}, \ and\
  \bibinfo {author} {\bibfnamefont {B.}~\bibnamefont {Qi}},\ }\href@noop {}
  {\bibfield  {journal} {\bibinfo  {journal} {Phys. Rev. Lett.}\ }\textbf
  {\bibinfo {volume} {108}},\ \bibinfo {pages} {130503} (\bibinfo {year}
  {2012})}\BibitemShut {NoStop}%
\bibitem [{\citenamefont {Vazirani}\ and\ \citenamefont {Vidick}(2014)}]{VV14}%
  \BibitemOpen
  \bibfield  {author} {\bibinfo {author} {\bibfnamefont {U.}~\bibnamefont
  {Vazirani}}\ and\ \bibinfo {author} {\bibfnamefont {T.}~\bibnamefont
  {Vidick}},\ }\href@noop {} {\bibfield  {journal} {\bibinfo  {journal} {Phys.
  Rev. Lett.}\ }\textbf {\bibinfo {volume} {113}},\ \bibinfo {pages} {140501}
  (\bibinfo {year} {2014})}\BibitemShut {NoStop}%
\bibitem [{\citenamefont {Br{\'a}dler}\ \emph {et~al.}(2016)\citenamefont
  {Br{\'a}dler}, \citenamefont {Mirhosseini}, \citenamefont {Fickler},
  \citenamefont {Broadbent},\ and\ \citenamefont {Boyd}}]{BMFBB16}%
  \BibitemOpen
  \bibfield  {author} {\bibinfo {author} {\bibfnamefont {K.}~\bibnamefont
  {Br{\'a}dler}}, \bibinfo {author} {\bibfnamefont {M.}~\bibnamefont
  {Mirhosseini}}, \bibinfo {author} {\bibfnamefont {R.}~\bibnamefont
  {Fickler}}, \bibinfo {author} {\bibfnamefont {A.}~\bibnamefont {Broadbent}},
  \ and\ \bibinfo {author} {\bibfnamefont {R.}~\bibnamefont {Boyd}},\
  }\href@noop {} {\bibfield  {journal} {\bibinfo  {journal} {New J. Phys.}\
  }\textbf {\bibinfo {volume} {18}},\ \bibinfo {pages} {073030} (\bibinfo
  {year} {2016})}\BibitemShut {NoStop}%
\bibitem [{\citenamefont {Dellantonio}\ \emph {et~al.}(2018)\citenamefont
  {Dellantonio}, \citenamefont {S{\o}rensen},\ and\ \citenamefont
  {Bacco}}]{DSB18}%
  \BibitemOpen
  \bibfield  {author} {\bibinfo {author} {\bibfnamefont {L.}~\bibnamefont
  {Dellantonio}}, \bibinfo {author} {\bibfnamefont {A.~S.}\ \bibnamefont
  {S{\o}rensen}}, \ and\ \bibinfo {author} {\bibfnamefont {B.}~\bibnamefont
  {Bacco}},\ }\href@noop {} {\bibfield  {journal} {\bibinfo  {journal} {Phys.
  Rev. A}\ }\textbf {\bibinfo {volume} {98}},\ \bibinfo {pages} {062301}
  (\bibinfo {year} {2018})}\BibitemShut {NoStop}%
\bibitem [{\citenamefont {Hayashi}(2007)}]{Hayashi07}%
  \BibitemOpen
  \bibfield  {author} {\bibinfo {author} {\bibfnamefont {M.}~\bibnamefont
  {Hayashi}},\ }\href@noop {} {\bibfield  {journal} {\bibinfo  {journal} {New
  J. Phys.}\ }\textbf {\bibinfo {volume} {9}},\ \bibinfo {pages} {284}
  (\bibinfo {year} {2007})}\BibitemShut {NoStop}%
\bibitem [{\citenamefont {Hayashi}\ and\ \citenamefont
  {Nakayama}(2014)}]{HN14}%
  \BibitemOpen
  \bibfield  {author} {\bibinfo {author} {\bibfnamefont {M.}~\bibnamefont
  {Hayashi}}\ and\ \bibinfo {author} {\bibfnamefont {R.}~\bibnamefont
  {Nakayama}},\ }\href@noop {} {\bibfield  {journal} {\bibinfo  {journal} {New
  J. Phys.}\ }\textbf {\bibinfo {volume} {16}},\ \bibinfo {pages} {063009}
  (\bibinfo {year} {2014})}\BibitemShut {NoStop}%
\bibitem [{\citenamefont {Chau}(2018)}]{Chau18}%
  \BibitemOpen
  \bibfield  {author} {\bibinfo {author} {\bibfnamefont {H.~F.}\ \bibnamefont
  {Chau}},\ }\href@noop {} {\bibfield  {journal} {\bibinfo  {journal} {Phys.
  Rev. A}\ }\textbf {\bibinfo {volume} {97}},\ \bibinfo {pages} {040301(R)}
  (\bibinfo {year} {2018})}\BibitemShut {NoStop}%
\bibitem [{\citenamefont {Chau}\ and\ \citenamefont {Ng}(2020)}]{CN20}%
  \BibitemOpen
  \bibfield  {author} {\bibinfo {author} {\bibfnamefont {H.~F.}\ \bibnamefont
  {Chau}}\ and\ \bibinfo {author} {\bibfnamefont {K.~C.~J.}\ \bibnamefont
  {Ng}},\ }\href@noop {} {\bibfield  {journal} {\bibinfo  {journal} {New J.
  Phys.}\ } (\bibinfo {year} {2020})},\ \bibinfo {note} {in press}\BibitemShut
  {NoStop}%
\bibitem [{\citenamefont {Lim}\ \emph {et~al.}(2014)\citenamefont {Lim},
  \citenamefont {Curty}, \citenamefont {Walenta}, \citenamefont {Xu},\ and\
  \citenamefont {Zbinden}}]{LCWXZ14}%
  \BibitemOpen
  \bibfield  {author} {\bibinfo {author} {\bibfnamefont {C.~C.~W.}\
  \bibnamefont {Lim}}, \bibinfo {author} {\bibfnamefont {M.}~\bibnamefont
  {Curty}}, \bibinfo {author} {\bibfnamefont {N.}~\bibnamefont {Walenta}},
  \bibinfo {author} {\bibfnamefont {F.}~\bibnamefont {Xu}}, \ and\ \bibinfo
  {author} {\bibfnamefont {H.}~\bibnamefont {Zbinden}},\ }\href@noop {}
  {\bibfield  {journal} {\bibinfo  {journal} {Phys. Rev. A}\ }\textbf {\bibinfo
  {volume} {89}},\ \bibinfo {pages} {022307} (\bibinfo {year}
  {2014})}\BibitemShut {NoStop}%
\bibitem [{\citenamefont {Ma}\ \emph {et~al.}(2012)\citenamefont {Ma},
  \citenamefont {Fung},\ and\ \citenamefont {Razavi}}]{MFR12}%
  \BibitemOpen
  \bibfield  {author} {\bibinfo {author} {\bibfnamefont {X.}~\bibnamefont
  {Ma}}, \bibinfo {author} {\bibfnamefont {C.-H.~F.}\ \bibnamefont {Fung}}, \
  and\ \bibinfo {author} {\bibfnamefont {M.}~\bibnamefont {Razavi}},\
  }\href@noop {} {\bibfield  {journal} {\bibinfo  {journal} {Phys. Rev. A}\
  }\textbf {\bibinfo {volume} {86}},\ \bibinfo {pages} {052305} (\bibinfo
  {year} {2012})}\BibitemShut {NoStop}%
\bibitem [{\citenamefont {Sun}\ \emph {et~al.}(2013{\natexlab{a}})\citenamefont
  {Sun}, \citenamefont {Gao}, \citenamefont {Li},\ and\ \citenamefont
  {Liang}}]{SGLL13}%
  \BibitemOpen
  \bibfield  {author} {\bibinfo {author} {\bibfnamefont {S.-H.}\ \bibnamefont
  {Sun}}, \bibinfo {author} {\bibfnamefont {M.}~\bibnamefont {Gao}}, \bibinfo
  {author} {\bibfnamefont {C.-Y.}\ \bibnamefont {Li}}, \ and\ \bibinfo {author}
  {\bibfnamefont {L.-M.}\ \bibnamefont {Liang}},\ }\href@noop {} {\bibfield
  {journal} {\bibinfo  {journal} {Phys. Rev. A}\ }\textbf {\bibinfo {volume}
  {87}},\ \bibinfo {pages} {052329} (\bibinfo {year}
  {2013}{\natexlab{a}})}\BibitemShut {NoStop}%
\bibitem [{\citenamefont {Sun}\ \emph {et~al.}(2013{\natexlab{b}})\citenamefont
  {Sun}, \citenamefont {Gao}, \citenamefont {Li},\ and\ \citenamefont
  {Liang}}]{SGLL13e}%
  \BibitemOpen
  \bibfield  {author} {\bibinfo {author} {\bibfnamefont {S.-H.}\ \bibnamefont
  {Sun}}, \bibinfo {author} {\bibfnamefont {M.}~\bibnamefont {Gao}}, \bibinfo
  {author} {\bibfnamefont {C.-Y.}\ \bibnamefont {Li}}, \ and\ \bibinfo {author}
  {\bibfnamefont {L.-M.}\ \bibnamefont {Liang}},\ }\href@noop {} {\bibfield
  {journal} {\bibinfo  {journal} {Phys. Rev. A}\ }\textbf {\bibinfo {volume}
  {88}},\ \bibinfo {pages} {019905} (\bibinfo {year}
  {2013}{\natexlab{b}})}\BibitemShut {NoStop}%
\bibitem [{\citenamefont {Curty}\ \emph {et~al.}(2014)\citenamefont {Curty},
  \citenamefont {Xu}, \citenamefont {Cui}, \citenamefont {Lim}, \citenamefont
  {Tamaki},\ and\ \citenamefont {Lo}}]{CXCLTL14}%
  \BibitemOpen
  \bibfield  {author} {\bibinfo {author} {\bibfnamefont {M.}~\bibnamefont
  {Curty}}, \bibinfo {author} {\bibfnamefont {F.}~\bibnamefont {Xu}}, \bibinfo
  {author} {\bibfnamefont {W.}~\bibnamefont {Cui}}, \bibinfo {author}
  {\bibfnamefont {C.~C.~W.}\ \bibnamefont {Lim}}, \bibinfo {author}
  {\bibfnamefont {K.}~\bibnamefont {Tamaki}}, \ and\ \bibinfo {author}
  {\bibfnamefont {H.-K.}\ \bibnamefont {Lo}},\ }\href@noop {} {\bibfield
  {journal} {\bibinfo  {journal} {Nat. Comm.}\ }\textbf {\bibinfo {volume}
  {5}},\ \bibinfo {pages} {372} (\bibinfo {year} {2014})}\BibitemShut {NoStop}%
\bibitem [{\citenamefont {Xu}\ \emph {et~al.}(2014)\citenamefont {Xu},
  \citenamefont {Xu},\ and\ \citenamefont {Lo}}]{XXL14}%
  \BibitemOpen
  \bibfield  {author} {\bibinfo {author} {\bibfnamefont {F.}~\bibnamefont
  {Xu}}, \bibinfo {author} {\bibfnamefont {H.}~\bibnamefont {Xu}}, \ and\
  \bibinfo {author} {\bibfnamefont {H.-K.}\ \bibnamefont {Lo}},\ }\href@noop {}
  {\bibfield  {journal} {\bibinfo  {journal} {Phys. Rev. A}\ }\textbf {\bibinfo
  {volume} {89}},\ \bibinfo {pages} {052333} (\bibinfo {year}
  {2014})}\BibitemShut {NoStop}%
\bibitem [{\citenamefont {Yu}\ \emph {et~al.}(2015)\citenamefont {Yu},
  \citenamefont {Zhou},\ and\ \citenamefont {Wang}}]{YZW15}%
  \BibitemOpen
  \bibfield  {author} {\bibinfo {author} {\bibfnamefont {Z.-W.}\ \bibnamefont
  {Yu}}, \bibinfo {author} {\bibfnamefont {Y.-H.}\ \bibnamefont {Zhou}}, \ and\
  \bibinfo {author} {\bibfnamefont {X.-B.}\ \bibnamefont {Wang}},\ }\href@noop
  {} {\bibfield  {journal} {\bibinfo  {journal} {Phys. Rev. A}\ }\textbf
  {\bibinfo {volume} {91}},\ \bibinfo {pages} {032318} (\bibinfo {year}
  {2015})}\BibitemShut {NoStop}%
\bibitem [{\citenamefont {Zhou}\ \emph {et~al.}(2016)\citenamefont {Zhou},
  \citenamefont {Yu},\ and\ \citenamefont {Wang}}]{ZYW16}%
  \BibitemOpen
  \bibfield  {author} {\bibinfo {author} {\bibfnamefont {Y.-H.}\ \bibnamefont
  {Zhou}}, \bibinfo {author} {\bibfnamefont {Z.-W.}\ \bibnamefont {Yu}}, \ and\
  \bibinfo {author} {\bibfnamefont {X.-B.}\ \bibnamefont {Wang}},\ }\href@noop
  {} {\bibfield  {journal} {\bibinfo  {journal} {Phys. Rev. A}\ }\textbf
  {\bibinfo {volume} {93}},\ \bibinfo {pages} {042324} (\bibinfo {year}
  {2016})}\BibitemShut {NoStop}%
\bibitem [{\citenamefont {Zhang}\ \emph {et~al.}(2017)\citenamefont {Zhang},
  \citenamefont {Zhao}, \citenamefont {Razavi},\ and\ \citenamefont
  {Ma}}]{ZZRM17}%
  \BibitemOpen
  \bibfield  {author} {\bibinfo {author} {\bibfnamefont {Z.}~\bibnamefont
  {Zhang}}, \bibinfo {author} {\bibfnamefont {Q.}~\bibnamefont {Zhao}},
  \bibinfo {author} {\bibfnamefont {M.}~\bibnamefont {Razavi}}, \ and\ \bibinfo
  {author} {\bibfnamefont {X.}~\bibnamefont {Ma}},\ }\href@noop {} {\bibfield
  {journal} {\bibinfo  {journal} {Phys. Rev. A}\ }\textbf {\bibinfo {volume}
  {95}},\ \bibinfo {pages} {012333} (\bibinfo {year} {2017})}\BibitemShut
  {NoStop}%
\bibitem [{\citenamefont {Mao}\ \emph {et~al.}(2018)\citenamefont {Mao},
  \citenamefont {Zhou}, \citenamefont {Zhu}, \citenamefont {Zhang},
  \citenamefont {Zhang},\ and\ \citenamefont {Wang}}]{MZZZZW18}%
  \BibitemOpen
  \bibfield  {author} {\bibinfo {author} {\bibfnamefont {C.-C.}\ \bibnamefont
  {Mao}}, \bibinfo {author} {\bibfnamefont {X.-Y.}\ \bibnamefont {Zhou}},
  \bibinfo {author} {\bibfnamefont {J.-R.}\ \bibnamefont {Zhu}}, \bibinfo
  {author} {\bibfnamefont {C.-H.}\ \bibnamefont {Zhang}}, \bibinfo {author}
  {\bibfnamefont {C.-M.}\ \bibnamefont {Zhang}}, \ and\ \bibinfo {author}
  {\bibfnamefont {Q.}~\bibnamefont {Wang}},\ }\href@noop {} {\bibfield
  {journal} {\bibinfo  {journal} {Opt. Express}\ }\textbf {\bibinfo {volume}
  {26}},\ \bibinfo {pages} {13289} (\bibinfo {year} {2018})}\BibitemShut
  {NoStop}%
\bibitem [{\citenamefont {Wang}\ \emph {et~al.}(2019)\citenamefont {Wang},
  \citenamefont {Bao}, \citenamefont {Zhou}, \citenamefont {Jiang},\ and\
  \citenamefont {Li}}]{WBZJL19}%
  \BibitemOpen
  \bibfield  {author} {\bibinfo {author} {\bibfnamefont {Y.}~\bibnamefont
  {Wang}}, \bibinfo {author} {\bibfnamefont {W.-S.}\ \bibnamefont {Bao}},
  \bibinfo {author} {\bibfnamefont {C.}~\bibnamefont {Zhou}}, \bibinfo {author}
  {\bibfnamefont {M.-S.}\ \bibnamefont {Jiang}}, \ and\ \bibinfo {author}
  {\bibfnamefont {H.-W.}\ \bibnamefont {Li}},\ }\href@noop {} {\bibfield
  {journal} {\bibinfo  {journal} {J. Opt. Soc. Amer. B}\ }\textbf {\bibinfo
  {volume} {36}},\ \bibinfo {pages} {83} (\bibinfo {year} {2019})}\BibitemShut
  {NoStop}%
\bibitem [{\citenamefont {Bennett}\ and\ \citenamefont
  {Brassard}(1984)}]{BB84}%
  \BibitemOpen
  \bibfield  {author} {\bibinfo {author} {\bibfnamefont {C.~H.}\ \bibnamefont
  {Bennett}}\ and\ \bibinfo {author} {\bibfnamefont {G.}~\bibnamefont
  {Brassard}},\ }in\ \href@noop {} {\emph {\bibinfo {booktitle} {Proc. IEEE
  Int. Conf. on Computers, Systems and Signal Processing}}}\ (\bibinfo
  {publisher} {IEEE Press},\ \bibinfo {year} {1984})\ pp.\ \bibinfo {pages}
  {175--179}\BibitemShut {NoStop}%
\bibitem [{\citenamefont {Yuan}\ \emph {et~al.}(2016)\citenamefont {Yuan},
  \citenamefont {Zhang}, \citenamefont {L{\"u}tkenhaus},\ and\ \citenamefont
  {Ma}}]{YZLM16}%
  \BibitemOpen
  \bibfield  {author} {\bibinfo {author} {\bibfnamefont {X.}~\bibnamefont
  {Yuan}}, \bibinfo {author} {\bibfnamefont {Z.}~\bibnamefont {Zhang}},
  \bibinfo {author} {\bibfnamefont {N.}~\bibnamefont {L{\"u}tkenhaus}}, \ and\
  \bibinfo {author} {\bibfnamefont {X.}~\bibnamefont {Ma}},\ }\href@noop {}
  {\bibfield  {journal} {\bibinfo  {journal} {Phys. Rev. A}\ }\textbf {\bibinfo
  {volume} {94}},\ \bibinfo {pages} {062305} (\bibinfo {year}
  {2016})}\BibitemShut {NoStop}%
\bibitem [{\citenamefont {Ma}\ and\ \citenamefont {Razavi}(2012)}]{MR12}%
  \BibitemOpen
  \bibfield  {author} {\bibinfo {author} {\bibfnamefont {X.}~\bibnamefont
  {Ma}}\ and\ \bibinfo {author} {\bibfnamefont {M.}~\bibnamefont {Razavi}},\
  }\href@noop {} {\bibfield  {journal} {\bibinfo  {journal} {Phys. Rev. A}\
  }\textbf {\bibinfo {volume} {86}},\ \bibinfo {pages} {062319} (\bibinfo
  {year} {2012})}\BibitemShut {NoStop}%
\bibitem [{\citenamefont {Lucamarini}\ \emph {et~al.}(2018)\citenamefont
  {Lucamarini}, \citenamefont {Yuan}, \citenamefont {Dynes},\ and\
  \citenamefont {Shields}}]{LYDS18}%
  \BibitemOpen
  \bibfield  {author} {\bibinfo {author} {\bibfnamefont {M.}~\bibnamefont
  {Lucamarini}}, \bibinfo {author} {\bibfnamefont {Z.~L.}\ \bibnamefont
  {Yuan}}, \bibinfo {author} {\bibfnamefont {J.~F.}\ \bibnamefont {Dynes}}, \
  and\ \bibinfo {author} {\bibfnamefont {A.~I.}\ \bibnamefont {Shields}},\
  }\href@noop {} {\bibfield  {journal} {\bibinfo  {journal} {Nature}\ }\textbf
  {\bibinfo {volume} {557}},\ \bibinfo {pages} {400} (\bibinfo {year}
  {2018})}\BibitemShut {NoStop}%
\bibitem [{\citenamefont {Ma}\ \emph {et~al.}(2018)\citenamefont {Ma},
  \citenamefont {Zeng},\ and\ \citenamefont {Zhou}}]{MZZ18}%
  \BibitemOpen
  \bibfield  {author} {\bibinfo {author} {\bibfnamefont {X.}~\bibnamefont
  {Ma}}, \bibinfo {author} {\bibfnamefont {P.}~\bibnamefont {Zeng}}, \ and\
  \bibinfo {author} {\bibfnamefont {H.}~\bibnamefont {Zhou}},\ }\href@noop {}
  {\bibfield  {journal} {\bibinfo  {journal} {Phys. Rev. X}\ }\textbf {\bibinfo
  {volume} {8}},\ \bibinfo {pages} {031043} (\bibinfo {year}
  {2018})}\BibitemShut {NoStop}%
\bibitem [{\citenamefont {Tomamichel}\ and\ \citenamefont
  {Leverrier}(2017)}]{TL17}%
  \BibitemOpen
  \bibfield  {author} {\bibinfo {author} {\bibfnamefont {M.}~\bibnamefont
  {Tomamichel}}\ and\ \bibinfo {author} {\bibfnamefont {A.}~\bibnamefont
  {Leverrier}},\ }\href@noop {} {\bibfield  {journal} {\bibinfo  {journal}
  {Quantum}\ }\textbf {\bibinfo {volume} {1}},\ \bibinfo {pages} {14} (\bibinfo
  {year} {2017})}\BibitemShut {NoStop}%
\bibitem [{\citenamefont {Renner}(2005)}]{Renner05}%
  \BibitemOpen
  \bibfield  {author} {\bibinfo {author} {\bibfnamefont {R.}~\bibnamefont
  {Renner}},\ }\emph {\bibinfo {title} {Security Of QKD}},\ \href@noop {}
  {Ph.D. thesis},\ \bibinfo  {school} {ETH} (\bibinfo {year} {2005}),\ \bibinfo
  {note} {{D}iss. {ETH} {N}o. 16242, arXiv:quant-ph/0512258}\BibitemShut
  {NoStop}%
\bibitem [{\citenamefont {Kraus}\ \emph {et~al.}(2005)\citenamefont {Kraus},
  \citenamefont {Gisin},\ and\ \citenamefont {Renner}}]{KGR05}%
  \BibitemOpen
  \bibfield  {author} {\bibinfo {author} {\bibfnamefont {B.}~\bibnamefont
  {Kraus}}, \bibinfo {author} {\bibfnamefont {N.}~\bibnamefont {Gisin}}, \ and\
  \bibinfo {author} {\bibfnamefont {R.}~\bibnamefont {Renner}},\ }\href@noop {}
  {\bibfield  {journal} {\bibinfo  {journal} {Phys. Rev. Lett.}\ }\textbf
  {\bibinfo {volume} {95}},\ \bibinfo {pages} {080501} (\bibinfo {year}
  {2005})}\BibitemShut {NoStop}%
\bibitem [{\citenamefont {Renner}\ \emph {et~al.}(2005)\citenamefont {Renner},
  \citenamefont {Gisin},\ and\ \citenamefont {Kraus}}]{RGK05}%
  \BibitemOpen
  \bibfield  {author} {\bibinfo {author} {\bibfnamefont {R.}~\bibnamefont
  {Renner}}, \bibinfo {author} {\bibfnamefont {N.}~\bibnamefont {Gisin}}, \
  and\ \bibinfo {author} {\bibfnamefont {B.}~\bibnamefont {Kraus}},\
  }\href@noop {} {\bibfield  {journal} {\bibinfo  {journal} {Phys. Rev. A}\
  }\textbf {\bibinfo {volume} {72}},\ \bibinfo {pages} {012332} (\bibinfo
  {year} {2005})}\BibitemShut {NoStop}%
\bibitem [{\citenamefont {Fung}\ \emph {et~al.}(2010)\citenamefont {Fung},
  \citenamefont {Ma},\ and\ \citenamefont {Chau}}]{FMC10}%
  \BibitemOpen
  \bibfield  {author} {\bibinfo {author} {\bibfnamefont {C.-H.~F.}\
  \bibnamefont {Fung}}, \bibinfo {author} {\bibfnamefont {X.}~\bibnamefont
  {Ma}}, \ and\ \bibinfo {author} {\bibfnamefont {H.~F.}\ \bibnamefont
  {Chau}},\ }\href@noop {} {\bibfield  {journal} {\bibinfo  {journal} {Phys.
  Rev. A}\ }\textbf {\bibinfo {volume} {81}},\ \bibinfo {pages} {012318}
  (\bibinfo {year} {2010})}\BibitemShut {NoStop}%
\bibitem [{\citenamefont {Bru\ss}(1998)}]{B98}%
  \BibitemOpen
  \bibfield  {author} {\bibinfo {author} {\bibfnamefont {D.}~\bibnamefont
  {Bru\ss}},\ }\href@noop {} {\bibfield  {journal} {\bibinfo  {journal} {Phys.
  Rev. Lett.}\ }\textbf {\bibinfo {volume} {81}},\ \bibinfo {pages} {3018}
  (\bibinfo {year} {1998})}\BibitemShut {NoStop}%
\bibitem [{\citenamefont {McDiarmid}(1989)}]{McDiarmid}%
  \BibitemOpen
  \bibfield  {author} {\bibinfo {author} {\bibfnamefont {C.}~\bibnamefont
  {McDiarmid}},\ }\enquote {\bibinfo {title} {On the method of bounded
  differences},}\ in\ \href@noop {} {\emph {\bibinfo {booktitle} {Surveys In
  Combinatorics 1989}}},\ \bibinfo {series and number} {Lond. Math. Soc. Lect.
  Notes Series 141},\ \bibinfo {editor} {edited by\ \bibinfo {editor}
  {\bibfnamefont {J.}~\bibnamefont {Siemons}}}\ (\bibinfo  {publisher} {CUP},\
  \bibinfo {year} {1989})\ pp.\ \bibinfo {pages} {148--188}\BibitemShut
  {NoStop}%
\bibitem [{\citenamefont {McDiarmid}(1997)}]{McDiarmid1}%
  \BibitemOpen
  \bibfield  {author} {\bibinfo {author} {\bibfnamefont {C.}~\bibnamefont
  {McDiarmid}},\ }\href@noop {} {\bibfield  {journal} {\bibinfo  {journal}
  {Combin. Prob. \& Comput.}\ }\textbf {\bibinfo {volume} {6}},\ \bibinfo
  {pages} {79} (\bibinfo {year} {1997})}\BibitemShut {NoStop}%
\bibitem [{\citenamefont {Hoeffding}(1963)}]{Hoeffding}%
  \BibitemOpen
  \bibfield  {author} {\bibinfo {author} {\bibfnamefont {W.}~\bibnamefont
  {Hoeffding}},\ }\href@noop {} {\bibfield  {journal} {\bibinfo  {journal} {J.
  Amer. Stat. Assoc.}\ }\textbf {\bibinfo {volume} {58}},\ \bibinfo {pages}
  {13} (\bibinfo {year} {1963})}\BibitemShut {NoStop}%
\end{thebibliography}%

\end{document}